\documentclass[journal,comsoc]{IEEEtran}

%

\usepackage{graphicx}
\usepackage{amssymb}
\usepackage{amsmath}
\usepackage{cite}
\usepackage{subfigure}
\usepackage{mathrsfs}
\usepackage[displaymath,mathlines]{lineno}
\usepackage{color}
\usepackage{tabulary}
\usepackage{multirow}
\usepackage{algpseudocode}
\usepackage{algorithm,algpseudocode}
\usepackage{algorithmicx}
\usepackage{pbox}
\usepackage{multicol}
\usepackage{lipsum}
\usepackage{amsthm}
\usepackage{relsize}
\usepackage{lipsum}
\usepackage{epstopdf}
\usepackage{mathtools}
\usepackage{calc}

\makeatletter
\newcommand{\doublewidetilde}[1]{{%
		\mathpalette\double@widetilde{#1}}}
\newcommand{\double@widetilde}[2]{%
		\sbox\z@{$\m@th#1\widetilde{#2}$}%
		\ht\z@=.5\ht\z@
		\widetilde{\box\z@}}
\makeatother

\newtheorem{definition}{Definition}
\newtheorem{theorem}{Theorem}
\newtheorem{lemma}{Lemma}

\newtheorem{remark}{Remark}
\newtheorem{assumption}{Assumption}



\setcounter{page}{1}

\begin{document}
%
\title{Uplink Power Control in Massive MIMO with Double Scattering Channels}

\author{\normalsize Trinh~Van~Chien, \textit{Member}, \textit{IEEE}, Hien~Quoc~Ngo, \textit{Senior Member}, \textit{IEEE}, Symeon~Chatzinotas, \textit{Senior Member}, \textit{IEEE}, Bj\"{o}rn~Ottersten, \textit{Fellow}, \textit{IEEE}, and M\'erouane Debbah, \textit{Fellow}, \textit{IEEE}
\thanks{This   work   was partially supported by RISOTTI-Reconfigurable  Intelligent  Surface  for  Smart  Cities  under  Project FNR/C20/IS/14773976/RISOTTI, and  in  part by the UK Research and Innovation Future Leaders Fellowships under Grant MR/S017666/1. Parts of this paper were presented at IEEE ICC 2021 \cite{Chien2021ICC}. The  associate  editor  coordinating  the  review  of this paper  and approving  it for publication  was A. Liu. \textit{(Corresponding author: Trinh Van Chien.)}}
\thanks{T. V. Chien, S. Chatzinotas, B. Ottersten are with the Interdisciplinary Centre for Security, Reliability and Trust (SnT), University of Luxembourg, L-1855 Luxembourg, Luxembourg (email: vanchien.trinh@uni.lu, symeon.chatzinotas@uni.lu, and bjorn.ottersten@uni.lu).}
\thanks{
H. Q. Ngo is with the School of Electronics, Electrical Engineering and Computer Science, Queen's University Belfast, Belfast BT7 1NN, United Kingdom (email: hien.ngo@qub.ac.uk).}
\thanks{M. Debbah is  with the
	Technology Innovation Institute, Abu Dhabi, United Arab Emirates.  He is also with CentraleSup\'elec,
	University Paris-Saclay, 91192 Gif-sur-Yvette, France. (email:  merouane.debbah@tii.ae).}
\thanks{This paper was accepted to publish in the IEEE Transactions on Wireless Communications. Color  versions  of  one  or  more  of  the  figures  in  this  article  are  available online at http://ieeexplore.ieee.org.}
}

\maketitle

\begin{abstract}
Massive multiple-input multiple-output (MIMO) is a key technology for improving the spectral and energy efficiency in 5G-and-beyond wireless networks. For a tractable analysis, most of the previous works on Massive MIMO have been focused on the system performance with complex Gaussian channel impulse responses under rich-scattering environments. In contrast, this paper investigates the uplink ergodic spectral efficiency (SE) of each user under the double scattering channel model. We derive a closed-form expression of the uplink ergodic SE by exploiting the maximum ratio (MR) combining technique based on imperfect channel state information. We further study the asymptotic SE behaviors as a function of the number of antennas at each base station (BS) and the number of scatterers available at each radio channel. We then formulate and solve a total energy optimization problem for the uplink data transmission that aims at simultaneously satisfying the required SEs from all the users with limited data power resource. Notably, our proposed algorithms can cope with the congestion issue appearing when at least one user is served by lower SE than requested. Numerical results illustrate the effectiveness of the closed-form ergodic SE over Monte-Carlo simulations. Besides, the system can still provide the required SEs to many users even under congestion.
\end{abstract}

\begin{IEEEkeywords}
Massive MIMO, double scattering channels, total transmit power minimization, congestion issue.
\end{IEEEkeywords}

%
\IEEEpeerreviewmaketitle

\section{Introduction}
Wireless communications has sustained an exponential demand growth  in data throughput and reliability over the last decades \cite{tariq2020speculative,nguyen2020performance}. The cellular network topology with the assistance of MIMO technology has been evolved over time to indulge the growing demand. However, mobile traffic will increase as foreseen in a short time with $12.3$ billion wireless access devices by $2022$ \cite{index2019global}. To handle this issue, Massive MIMO, a disruptive technology with commercial deployments started in $2018$ \cite{bjornson2019massive}, not only inherits all the  multiplexing gain and spatial diversity of the conventional MIMO but also offers extra degree-of-freedoms as a consequence of equipping base stations (BSs) with many antennas \cite{Marzetta2010a}. Massive MIMO, therefore, provides unprecedented spectral and energy efficiency gains of modern wireless networks with only utilizing the contemporary time and frequency resources. Each Massive MIMO BS only exploits a low-cost linear processing technique such as maximum ratio (MR) or zero-forcing (ZF) combining to detect the transmit signals and obtain performance closed to the optimum thanks to the benefits of the use of many more antennas than users \cite{Hoydis2013a}. In the uplink transmission, combining vectors for data detection are constructed from channel estimates, and therefore, the overhead is only made practically proportional to the number of users by sending pilot signals in the uplink.

In Massive MIMO, the closed-form expression of the ergodic SE can be obtained in certain scenarios. For rich scattering environments such that propagation channels ideally follow uncorrelated Rayleigh fading, the uplink and downlink SEs were obtained as a function of large-scale fading coefficients when each BS exploits MR or ZF combining as in \cite{Ngo2013a, Chien2016b} and references therein. As such, many impacts such as array gains and channel estimation quality are explicitly observed in those ergodic rates, together with the power scaling laws are achieved. However, practical channels usually involve spatial correlation, which is modeled, for example utilizing correlated Rayleigh fading in the isotropic scattering environment where the gathered energy at an antenna array comes from many directions leading to the full ranks of covariance matrices with an overwhelming probability \cite{Hoydis2013a,Chien2018a,kammoun2014linear}. For rank deficiency occurring in poor scattering conditions, the Kronecker channel model is popularly used to describe the spatial correlations at the transmitter and receiver \cite{kammoun2015generalized,yu2002models}. The authors in \cite{Gesbert2002a} proposed the \textit{double scattering channel} and demonstrated that the channel capacity is also characterized by the structure of scattering in the propagation environment instead of the spatial correlations around the transceiver only.
	
A few works have studied the effects of low-rank channels in Massive MIMO communications. For the keyhole channels (uncorrelated and rank-deficient), the channel hardening and favorable propagation were investigated in \cite{ngo2017no} to impress a significant reduction of the ergodic SE compared with that of uncorrelated Rayleigh fading. An extension of this work to communications scenarios where users having multiple antennas has recently reported in \cite{James2020}. The first work numerically studying the uplink ergodic SE of cellular Massive MIMO systems with the double scattering channels (spatially correlated and rank-deficient) was found in \cite{van2016multi}. For theoretical analysis, the authors in \cite{kammoun2019asymptotic,Ye2020} computed the asymptotic ergodic SE of a single-cell Massive MIMO system with the different linear precoding techniques when the number of BS antennas, scatterers, and users grow large with the same rate. It is worth emphasizing that these works assumed each user utilizing an orthogonal pilot signal in a single-cell system and the formulations are asymptotically established. To the best of our knowledge, no prior work analyzes the performance of cellular Massive MIMO systems with a finite number of BS antennas, users, and scatterers, where the ergodic SE might have different features than the one at an asymptotic regime.

Many resource allocation tasks in Massive MIMO communications can be implemented on the large-scale fading time scale in place of the small-scale fading one by virtue of the channel hardening \cite{massivemimobook}. This makes resource allocation feasible to implement in practice. Various optimization problems with different utility functions have been formulated and solved in the Massive MIMO literature \cite{Chien2017b,Jin2015a,nguyen2018optimal}. Notice that the key component of Massive MIMO communications is that it can allow many users to access and share the radio resource at the same time with high quality of service. The max-min fairness optimization is therefore promising to provide uniform service to all the users in the coverage area \cite{Sun2015}.  However, for large-scale networks with many base stations and users, the fairness level will approach a zero rate \cite{ghazanfari2019fair}. In contrast, one can include separate SE constraints in the optimization problems to simultaneously maintain the quality of service for all the users \cite{van2020joint,ngo2018total}. However, since the users were randomly distributed, many user locations with poor channel conditions leads the optimization problems to be infeasible. The preliminary work in \cite{van2020uplink} has indicated that many users are still served by the required SEs if we can detect and relax the constraints of unsatisfied users when solving the problems and analyzing uncorrelated Rayleigh fading only.

By exploiting the double scattering channel model, this paper considers a Massive MIMO system in which a set of orthogonal pilot signals are reused by all the users such that the BSs can estimate channels in the pilot training phase. We then compute the uplink ergodic SE of each user in relation to  the channel structure and propagation environment. The ergodic rate is then used to formulate and solve the total energy optimization problem for the uplink data transmission when each BS uses MR combining to detect the desired signals. Our main contributions are summarized as follows:
\begin{itemize}
	\item A new ergodic SE expression is derived in closed form for a finite number of antennas at each BS while the number of scatterers observed by each user and BS is different from each other. This closed-form SE expression explicitly demonstrates the influence of pilot contamination, channel estimation errors, and limited scatterers. Conforming with the literature, we also analyze the asymptotic closed-form SE expression when the number of antennas and/or scatterers grows large. We analytically testify the existence of a saturated point in most of the scenarios, but although the system still can offer an unbounded capacity under a certain condition.
	\item We formulate a total uplink data energy minimization problem subject to the required SE from every user and the power constraints. This problem may have an infeasible domain under the complication of simultaneously serving many users. For user locations and shadow fading realizations, where our optimization problem is feasible, the global optimum can be obtained in polynomial time owning to its convexity. 
	\item We propose two low computational complexity iterative algorithms that tackle the infeasible optimization problem by relaxing the SE constraints of unsatisfied users. At each iteration, the first algorithm allows users to transmit full data power whenever the required SE constraints are not satisfied. In contrast, the second algorithm gives a procedure to scale down data power assigned to users with the lower SEs than requested.  
	\item Numerical results manifest that the closed-form SE expression overlaps Monte-Carlo simulations in all the system parameter settings. The effectiveness of the proposed data power control algorithms are compared with the interior-point methods. For given user locations and shadow fading realizations that form infeasible problems, the system still can provide satisfactory service to many users after relaxing one or a few the required SE constraints. 
\end{itemize}
This paper is organized as follows: Section~\ref{Sec:SystemModel} presents the considered Massive MIMO system under the double scattering channels and derives the closed-form expression of the uplink SE for the case each BS utilizing MR combining to decode the transmitted signals. We also compute the asymptotic SE as different factors grow large. Section~\ref{Sec:Optimization} formulates the total data energy minimization problem and characterizes its canonical form and feasible domain. The two algorithms to obtain a solution to this problem and handle the congestion issue are proposed in Section~\ref{Sec:Solutions}. Finally, Section~\ref{Sec:NumericalResults} shows extensive numerical results and the main conclusions are drawn in Section~\ref{Sec:Conclusion}.

\textit{Notation}: Upper-case bold face letters are used to denote matrices
and lower-case bold face ones for vectors. $\mathbf{I}_{M}$ is the identity matrix of size $M \times M$. The operation $\mathbb{E} \{ \cdot \}$ and $\mathrm{Var} \{ \cdot \}$ denotes the expectation and variance of a random variable, respectively. The notation $\| \cdot \|$ is the Euclidean norm of a vector and $\| \cdot \|_2$ is the spectral norm of a matrix. Moreover, $\mathrm{tr}(\cdot)$ is the trace of a matrix. The regular and Hermitian transposes
are denoted by $(\cdot)^T$ and $(\cdot)^H$, respectively. Finally, $\mathcal{CN}(\cdot, \cdot)$ denotes the circularly symmetric complex Gaussian distribution.
	
\section{Massive MIMO System with Double Scattering Channels} \label{Sec:SystemModel}
We consider an uplink Massive MIMO system comprising $L$~cells, where cell~$l$ has one BS equipped with $M$ antennas and serving $K$ single-antenna users. Even though the propagation channels change over time and frequency, we use a quasi-static channel model where the time-frequency plane is divided into coherence blocks. Each coherence block comprises $\tau_c$ symbols such that the channel between an arbitrary user and the BS is static and frequency flat. This paper assumes that instantaneous channels are not known at the BSs. Therefore, in each coherence block, the $\tau_p$ symbols are dedicated to the pilot training phase and the remaining $\tau_c - \tau_p$ symbols are used for the uplink data transmission. The channel between user~$k$ in cell~$l$ and BS~$l'$ is modeled by the double scattering channel model \cite{yu2002models,van2016multi}, which is\footnote{This outdoor channel model was initiated for conventional MIMO systems under a far-field region and dedicated sub $6$-GHz bands for mobile services. In cellular Massive MIMO communications, the far-field effects are still observed since many antenna components can be practically installed in a small compact array \cite{bjornson2019massive}.}
\begin{equation} \label{eq:Channel}
\mathbf{h}_{lk}^{l'}  =  \sqrt{\frac{\beta_{lk}^{l'}}{S_{lk}^{l'}}} \left(\mathbf{R}_{lk}^{l'}\right)^{1/2} \mathbf{G}_{lk}^{l'} \left(\widetilde{\mathbf{R}}_{lk}^{l'} \right)^{1/2} \mathbf{g}_{lk}^{l'},
\end{equation}
where $\beta_{lk}^{l'}$ is the large-scale fading coefficient, which models the effects of the pathloss due to long distance and shadow fading due to obstacles. The integer parameter $S_{lk}^{l'}$ is the number of scatters generating the channel between BS~$l'$ and user~$k$ in cell~$l$. The matrix $\mathbf{R}_{lk}^{l'} \in \mathbb{C}^{M \times M }$ represents the correlation between the BS antennas and its scatterers; $\mathbf{G}_{lk}^{l'} \in \mathbb{C}^{M \times S_{lk}^{l'}}$ includes the small-scale fading coefficients between BS~$l'$ and its scattering cluster. The matrix $\widetilde{\mathbf{R}}_{lk}^{l'} \in \mathbb{C}^{S_{lk}^{l'} \times S_{lk}^{l'} }$ stands for the correlation between the transmit and receive
scatterers and $\mathbf{g}_{lk}^{l'} \in \mathbb{C}^{S_{lk}^{l'}}$ represents the small-scale fading
between the user and its scattering cluster. The elements of both $\mathbf{G}_{lk}^{l'}$ and $\mathbf{g}_{lk}^{l'}$ are independent and identically distributed as $\mathcal{CN}(0,1)$ by constraints on the trace of the covariance matrices. 
\begin{remark}
The double scattering channel model in \eqref{eq:Channel} reflects three important aspects of Massive MIMO channel propagation: the rank deficiency at the transceiver, the spatial fading correlation, and the signal attenuation by controlling multiple factors such as the number of scatterers in the environment, the correlation matrices, and the large-scale fading coefficients. It is more an involved channel model than in previous non-line-of-sight models to describe the sensitivity of the actual channel capacity to both the fading correlation and scattering structure in real propagation environments \cite{van2016multi, Gesbert2002a}. This model spans scenarios from uncorrelated Rayleigh to the single-keyhole channels. In practical systems, the covariance matrices can be estimated by averaging over many realizations of instantaneous channels, while the number of scatterers can be obtained by formulating and solving, for example, an $\ell_p-$norm optimization problem, which matches the double scattering channel model with measurement data \cite{patzold2003mobile}.
\end{remark} 
The further interesting statistical information of the double scattering channels, which is later useful for computing the uplink ergodic SE expression in a closed form, is presented in the following lemma.
\begin{lemma} \label{corollary:ChannelProperties}
	Let us consider the two random channel vectors $\mathbf{h}_{l'k'}^l$ and $\mathbf{h}_{l''k''}^l$ generated by the double scattering channel model and a deterministic matrix $\mathbf{B} \in \mathbb{C}^{ M \times M}$. If $ (l',k') \neq (l'', k'')$, it holds that
	\begin{equation} \label{eq:Cor1eq9}
		\mathbb{E} \left\{ \Big| \big(\mathbf{h}_{l'k'}^{l}\big)^{\rm H} \mathbf{B}  \mathbf{h}_{l''k''}^l \Big|^2 \right\} = b_{l'k'}^l \mathrm{tr} \big( \mathbf{B} \mathbf{R}_{l''k''}^l \mathbf{B}^{\rm H} \mathbf{R}_{l'k'}^l \big),
	\end{equation}
	with  $d_{l'k'}^l = \mathrm{tr}\big(\widetilde{\mathbf{R}}_{l'k'}^l\big)/S_{l'k'}^l $ and $b_{l'k'}^l = \beta_{l'k'}^l d_{l'k'}^l \beta_{l''k''}^l d_{l''k''}^l$. Moreover, for the channel $\mathbf{h}_{l'k'}^l$, it holds that
	\begin{multline} \label{eq:4moments}
		\mathbb{E} \left\{ \Big| \big(\mathbf{h}_{l'k'}^l \big)^{\rm H} \mathbf{B}  \mathbf{h}_{l'k'}^l \Big|^2 \right\} =  \big(\beta_{l'k'}^l \big)^2  \left( \big(d_{l'k'}^l \big)^2 + \frac{ \mathrm{tr} \Big( \left(\widetilde{\mathbf{R}}_{l'k'}^l\right)^2 \Big)}{\big(S_{l'k'}^l\big)^2} \right) \times \\
		\left( \Big| \mathrm{tr} \big( \mathbf{R}_{l'k'}^l  \mathbf{B}  \big)\Big|^2 + \mathrm{tr} \left( \mathbf{R}_{l'k'}^l \mathbf{B}   \mathbf{R}_{l'k'}^l \mathbf{B}^{\rm H} \right) \right).
	\end{multline}
\end{lemma}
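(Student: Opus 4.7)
The plan is to exploit the three-layer structure $\mathbf{h}_{lk}^{l'}=\sqrt{\beta_{lk}^{l'}/S_{lk}^{l'}}\,(\mathbf{R}_{lk}^{l'})^{1/2}\mathbf{G}_{lk}^{l'}(\widetilde{\mathbf{R}}_{lk}^{l'})^{1/2}\mathbf{g}_{lk}^{l'}$ by conditioning from the outside in: the outer Gaussian vector $\mathbf{g}$ enters only through standard quadratic-form moments, and the inner Gaussian matrix $\mathbf{G}$ enters only through simple second-moment identities. This two-layer reduction is what I expect to keep both closed forms tractable.

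For \eqref{eq:Cor1eq9} I would first rewrite the bilinear form as $(\mathbf{h}_{l'k'}^{l})^{\rm H}\mathbf{B}\,\mathbf{h}_{l''k''}^l=\sqrt{\beta_{l'k'}^l\beta_{l''k''}^l/(S_{l'k'}^l S_{l''k''}^l)}\,(\mathbf{g}_{l'k'}^l)^{\rm H}\mathbf{A}\,\mathbf{g}_{l''k''}^l$, where $\mathbf{A}$ bundles all the matrix factors and contains the two \emph{independent} matrices $\mathbf{G}_{l'k'}^l,\mathbf{G}_{l''k''}^l$ (independence coming from $(l',k')\neq(l'',k'')$). Conditioning on $\mathbf{A}$ and using that $\mathbf{g}_{l'k'}^l,\mathbf{g}_{l''k''}^l$ are independent $\mathcal{CN}(\mathbf{0},\mathbf{I})$, a one-line Isserlis calculation gives $\mathbb{E}\{|\mathbf{g}_1^{\rm H}\mathbf{A}\mathbf{g}_2|^2\mid\mathbf{A}\}=\mathrm{tr}(\mathbf{A}\mathbf{A}^{\rm H})$. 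Expanding $\mathbf{A}\mathbf{A}^{\rm H}$, permuting the trace cyclically, and applying $\mathbb{E}\{\mathbf{G}\widetilde{\mathbf{R}}\mathbf{G}^{\rm H}\}=\mathrm{tr}(\widetilde{\mathbf{R}})\mathbf{I}_M$ to each of the two inner blocks produces the two scalar factors $\mathrm{tr}(\widetilde{\mathbf{R}}_{l'k'}^l)$ and $\mathrm{tr}(\widetilde{\mathbf{R}}_{l''k''}^l)$; absorbing these into the prefactor yields the coefficient $b_{l'k'}^l$, and the surviving core is $\mathrm{tr}(\mathbf{B}\mathbf{R}_{l''k''}^l\mathbf{B}^{\rm H}\mathbf{R}_{l'k'}^l)$, matching \eqref{eq:Cor1eq9}.

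For \eqref{eq:4moments} the two channels coincide, so I would instead write $(\mathbf{h}_{l'k'}^l)^{\rm H}\mathbf{B}\,\mathbf{h}_{l'k'}^l=(\beta_{l'k'}^l/S_{l'k'}^l)\,\mathbf{g}^{\rm H}\mathbf{C}\mathbf{g}$ with $\mathbf{C}:=(\widetilde{\mathbf{R}}_{l'k'}^l)^{1/2}\mathbf{G}^{\rm H}(\mathbf{R}_{l'k'}^l)^{1/2}\mathbf{B}(\mathbf{R}_{l'k'}^l)^{1/2}\mathbf{G}(\widetilde{\mathbf{R}}_{l'k'}^l)^{1/2}$, and condition on $\mathbf{G}$ via the standard fourth-moment identity $\mathbb{E}\{|\mathbf{g}^{\rm H}\mathbf{C}\mathbf{g}|^2\mid\mathbf{G}\}=|\mathrm{tr}(\mathbf{C})|^2+\mathrm{tr}(\mathbf{C}\mathbf{C}^{\rm H})$. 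Introducing the shorthands $\mathbf{Q}:=\mathbf{G}\widetilde{\mathbf{R}}_{l'k'}^l\mathbf{G}^{\rm H}$ and $\mathbf{M}:=(\mathbf{R}_{l'k'}^l)^{1/2}\mathbf{B}(\mathbf{R}_{l'k'}^l)^{1/2}$, cyclic invariance of the trace recasts the two pieces as $\mathrm{tr}(\mathbf{Q}\mathbf{M})$ and $\mathrm{tr}(\mathbf{Q}\mathbf{M}\mathbf{Q}\mathbf{M}^{\rm H})$, so everything reduces to the mixed second moment of $\mathbf{Q}$. A direct Wick expansion on the i.i.d.\ $\mathcal{CN}(0,1)$ entries of $\mathbf{G}$ delivers $\mathbb{E}\{Q_{ij}Q_{kl}\}=\delta_{ij}\delta_{kl}(\mathrm{tr}(\widetilde{\mathbf{R}}_{l'k'}^l))^2+\delta_{il}\delta_{jk}\,\mathrm{tr}((\widetilde{\mathbf{R}}_{l'k'}^l)^2)$, and elementary index contractions then produce closed forms for $\mathbb{E}\{|\mathrm{tr}(\mathbf{C})|^2\}$ and $\mathbb{E}\{\mathrm{tr}(\mathbf{C}\mathbf{C}^{\rm H})\}$.

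The step I expect to be the main obstacle is not any individual expectation but noticing the factorisation that appears after \emph{summing} the two pieces: the ``diagonal'' and ``swap'' Wick pairings contribute $|\mathrm{tr}(\mathbf{M})|^2$ and $\mathrm{tr}(\mathbf{M}\mathbf{M}^{\rm H})$ in complementary positions, and only when the two contributions are added does the total collapse into the clean product $[(\mathrm{tr}(\widetilde{\mathbf{R}}_{l'k'}^l))^2+\mathrm{tr}((\widetilde{\mathbf{R}}_{l'k'}^l)^2)][|\mathrm{tr}(\mathbf{M})|^2+\mathrm{tr}(\mathbf{M}\mathbf{M}^{\rm H})]$. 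Multiplying by the prefactor $(\beta_{l'k'}^l/S_{l'k'}^l)^2$, and rewriting $\mathrm{tr}(\mathbf{M})=\mathrm{tr}(\mathbf{R}_{l'k'}^l\mathbf{B})$, $\mathrm{tr}(\mathbf{M}\mathbf{M}^{\rm H})=\mathrm{tr}(\mathbf{R}_{l'k'}^l\mathbf{B}\mathbf{R}_{l'k'}^l\mathbf{B}^{\rm H})$, and $\mathrm{tr}(\widetilde{\mathbf{R}}_{l'k'}^l)/S_{l'k'}^l=d_{l'k'}^l$ then reproduces \eqref{eq:4moments} term by term.
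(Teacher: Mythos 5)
Your proposal is correct; both closed forms come out right, and I checked that your Wick identity $\mathbb{E}\{Q_{ij}Q_{kl}\}=\delta_{ij}\delta_{kl}\big(\mathrm{tr}(\widetilde{\mathbf{R}}_{l'k'}^l)\big)^2+\delta_{il}\delta_{jk}\,\mathrm{tr}\big((\widetilde{\mathbf{R}}_{l'k'}^l)^2\big)$ and the subsequent contractions do produce the factorised product after summing the two pieces. For \eqref{eq:Cor1eq9} your argument is essentially the paper's: independence of the two channels reduces everything to $\mathrm{tr}\big(\mathbf{B}\,\mathbb{E}\{\mathbf{h}_{l''k''}^l(\mathbf{h}_{l''k''}^l)^{\rm H}\}\mathbf{B}^{\rm H}\,\mathbb{E}\{\mathbf{h}_{l'k'}^l(\mathbf{h}_{l'k'}^l)^{\rm H}\}\big)$ plus the identity $\mathbb{E}\{\mathbf{G}\widetilde{\mathbf{R}}\mathbf{G}^{\rm H}\}=\mathrm{tr}(\widetilde{\mathbf{R}})\mathbf{I}_M$; your conditioning on $\mathbf{A}$ is just a slightly longer way to the same covariance factorisation. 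For \eqref{eq:4moments} your route genuinely differs: the paper conditions in the opposite order, normalising $\mathbf{z}_{l'k'}^l=\mathbf{G}\widetilde{\mathbf{R}}^{1/2}\mathbf{g}/\|\widetilde{\mathbf{R}}^{1/2}\mathbf{g}\|$ so that $\mathbf{z}_{l'k'}^l\sim\mathcal{CN}(\mathbf{0},\mathbf{I}_M)$ is \emph{independent} of $\mathbf{g}$, whence the expectation splits immediately into the product $\mathbb{E}\{\|\widetilde{\mathbf{R}}^{1/2}\mathbf{g}\|^4\}\cdot\mathbb{E}\{|\mathbf{z}^{\rm H}\mathbf{M}\mathbf{z}|^2\}$ of two standard fourth moments, each giving one bracket of the answer directly. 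Your version conditions on $\mathbf{G}$ instead and must carry two non-factorised second moments of $\mathbf{Q}=\mathbf{G}\widetilde{\mathbf{R}}\mathbf{G}^{\rm H}$ until the final cancellation; this costs more index bookkeeping but uses only the elementary Wick rule, whereas the paper's normalisation trick buys an immediate factorisation at the price of the (slightly less obvious) observation that $\mathbf{z}_{l'k'}^l$ is standard Gaussian and independent of $\mathbf{g}_{l'k'}^l$.
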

\begin{proof}
	The proof is to compute the moments of non-Gaussian random variables and available in Appendix~\ref{Appendix:ChannelProperties}.
\end{proof}
In \eqref{eq:Cor1eq9}, the second moment obtained for the inner product of two different channel vectors is a deterministic value, which depends on their covariance matrices and scales up with the number of antennas installed at BS~$l$, say $M$. Meanwhile, the weighted forth moment in \eqref{eq:4moments} indicates a scaling factor of $M^2$. This moment is also inversely proportional to the number of scatterers. The moments of channels in Lemma~\ref{corollary:ChannelProperties} are utilized to compute the closed-form expression on the uplink ergodic rate of an arbitrary user.
\subsection{Uplink Pilot Training}
In each coherence block, each BS needs instantaneous channel state information for the uplink data detection. The $\tau_p$ symbols are dedicated to the uplink pilot training, which can create $\tau_p$ mutually orthogonal pilot signals. User~$k$ in cell~$l$ uses the deterministic pilot signal $\pmb{\phi}_{lk} \in \mathbb{C}^{\tau_p}$ with $\| \pmb{\phi}_{lk}\|^2 = \tau_p$. This pilot signal is also reused by other users in multiple cells and we can define the pilot reuse set as
\begin{equation} \label{eq:PilotReuseSet}
\mathcal{P}_{lk} = \left\{ (l',k'): \pmb{\phi}_{l'k'} = \pmb{\phi}_{lk}, l = 1, \ldots, L, k'=1, \ldots, K \right\},
\end{equation}
which contains the indices of all users sharing the same pilot signal as user~$k$ in cell~$l$, including $(l,k)$. Mathematically, it observes that
\begin{equation}
\pmb{\phi}_{lk}^{\rm H} \pmb{\phi}_{l'k'} = \begin{cases}
\tau_p, & \mbox{ if } (l',k') \in \mathcal{P}_{lk}, \\
0, & \mbox{ if } (l',k') \notin \mathcal{P}_{lk}.
\end{cases}
\end{equation}
At BS~$l$, the received pilot signal $\mathbf{Y}_l^p \in \mathbb{C}^{M \times \tau_p}$ with the superscript $p$ standing for the pilot training phase is formulated as
\begin{equation}
\mathbf{Y}_l^p  = \sum_{l'=1}^L \sum_{k'=1}^{K} \sqrt{\hat{p}_{l'k'}} \mathbf{h}_{l'k'}^l \pmb{\phi}_{l'k'}^{\rm H} + \mathbf{N}_l^p,
\end{equation}
where $\mathbf{N}_l^p \in \mathbb{C}^{M \times \tau_p}$ is additive noise with the independent and identically random elements distributed as $\mathcal{CN}(0, \sigma^2)$. BS~$l$ estimates the channel $\mathbf{h}_{l'k'}^l$ from user~$k'$ in cell~$l'$ by multiplying $\mathbf{Y}_{l}^p$ with the pilot sequence $\pmb{\phi}_{l'k'}$ as
\begin{equation} \label{eq:ylklp}
\mathbf{y}_{l'k'}^{l,p} = \mathbf{Y}_{l}^p \pmb{\phi}_{l'k'} = \sum_{(l'',k'') \in \mathcal{P}_{l'k'}} \sqrt{\hat{p}_{l''k''}} \tau_p \mathbf{h}_{l''k''}^l + \mathbf{N}_l^p \pmb{\phi}_{l'k'}.
\end{equation}
The minimum mean square error (MMSE) is not straightforward to apply to \eqref{eq:ylklp} because of the non Gaussian distributions. Nonetheless, the processed received signal $\mathbf{y}_{l'k'}^{l,p} \in \mathbb{C}^{M}$ has sufficient statistics to obtain a channel estimate of the origin $\mathbf{h}_{l'k'}^l$ by utilizing linear MMSE (LMMSE). We now consider the channel estimates under assumptions of statistical channel knowledge available at each BS.  
\begin{lemma} \label{lemma:ChannelEstPhaseAware}
By utilizing the LMMSE estimation, the channel estimate $\hat{\mathbf{h}}_{l'k'}^l \in \mathbb{C}^{M}$ from user~$k'$ in cell~$l'$ and BS~$l$ is 
\begin{equation} \label{eq:ChannelEst}
\hat{\mathbf{h}}_{l'k'}^l = \sqrt{\hat{p}_{l'k'}} \beta_{l'k'}^l d_{l'k'}^l \mathbf{R}_{l'k'}^l \pmb{\Psi}_{l'k'}^l  \mathbf{y}_{l'k'}^{l,p},
\end{equation}
where $\pmb{\Psi}_{l'k'}^l \in \mathbb{C}^{M \times M}$ is
\begin{equation}
\pmb{\Psi}_{l'k'}^l = \left(  \sum_{(l'',k'') \in \mathcal{P}_{l'k'}} a_{l''k''}^l \mathbf{R}_{l''k''}^l + \sigma^2 \mathbf{I}_{M}. \right)^{-1}.
\end{equation}
with $a_{l''k''}^l =  \tau_p \hat{p}_{l''k''} \beta_{l''k''}^l d_{l''k''}^l$. The covariance matrix of the channel estimate $\hat{\mathbf{h}}_{l'k'}^l$ is computed as
\begin{equation} \label{eq:CovarianceEst}
\mathbb{E} \Big\{ \hat{\mathbf{h}}_{l'k'}^l \big(\hat{\mathbf{h}}_{l'k'}^l\big)^{\rm H} \Big\} =\hat{p}_{l'k'} \big(\beta_{l'k'}^l\big)^2 \big(d_{l'k'}^l \big)^2 \tau_p \mathbf{R}_{l'k'}^l \pmb{\Psi}_{l'k'}^l  \mathbf{R}_{l'k'}^l.
\end{equation}
\end{lemma}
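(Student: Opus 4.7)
The plan is to apply the LMMSE estimator formula $\hat{\mathbf{h}}_{l'k'}^l = \mathbf{C}_{hy}\mathbf{C}_{yy}^{-1}\mathbf{y}_{l'k'}^{l,p}$, where $\mathbf{C}_{hy}=\mathbb{E}\{\mathbf{h}_{l'k'}^l (\mathbf{y}_{l'k'}^{l,p})^{\rm H}\}$ and $\mathbf{C}_{yy}=\mathbb{E}\{\mathbf{y}_{l'k'}^{l,p}(\mathbf{y}_{l'k'}^{l,p})^{\rm H}\}$. Since LMMSE only uses second-order statistics, non-Gaussianity of the double scattering channel is not an obstacle as long as these two moments are correctly computed.

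The first key step is to compute the single-channel covariance $\mathbb{E}\{\mathbf{h}_{l'k'}^l(\mathbf{h}_{l'k'}^l)^{\rm H}\}$. Plugging in \eqref{eq:Channel} gives a product of four random factors, but by the independence of $\mathbf{g}_{l'k'}^l$ and $\mathbf{G}_{l'k'}^l$ and the tower property I would first take the expectation over $\mathbf{g}_{l'k'}^l$ using $\mathbb{E}\{\mathbf{g}_{l'k'}^l(\mathbf{g}_{l'k'}^l)^{\rm H}\}=\mathbf{I}$, which collapses the $(\widetilde{\mathbf{R}}_{l'k'}^l)^{1/2}\mathbf{g}\mathbf{g}^{\rm H}(\widetilde{\mathbf{R}}_{l'k'}^l)^{1/2}$ sandwich to $\widetilde{\mathbf{R}}_{l'k'}^l$. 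I then take the expectation over $\mathbf{G}_{l'k'}^l$ using $\mathbb{E}\{\mathbf{G}\widetilde{\mathbf{R}}\mathbf{G}^{\rm H}\}=\mathrm{tr}(\widetilde{\mathbf{R}})\mathbf{I}_M$. The net result is the clean identity $\mathbb{E}\{\mathbf{h}_{l'k'}^l(\mathbf{h}_{l'k'}^l)^{\rm H}\}=\beta_{l'k'}^l d_{l'k'}^l\mathbf{R}_{l'k'}^l$, where $d_{l'k'}^l$ is exactly the ratio defined in Lemma~\ref{corollary:ChannelProperties}.

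Next, using \eqref{eq:ylklp} together with the mutual independence of channels with distinct indices and the independence of noise, I would obtain $\mathbf{C}_{hy}=\sqrt{\hat{p}_{l'k'}}\tau_p\,\beta_{l'k'}^l d_{l'k'}^l\mathbf{R}_{l'k'}^l$ and
\begin{equation*}
\mathbf{C}_{yy}=\sum_{(l'',k'')\in\mathcal{P}_{l'k'}}\hat{p}_{l''k''}\tau_p^2\,\beta_{l''k''}^l d_{l''k''}^l\mathbf{R}_{l''k''}^l+\sigma^2\tau_p\mathbf{I}_M=\tau_p\bigl(\pmb{\Psi}_{l'k'}^l\bigr)^{-1},
\end{equation*}
which is exactly the inverse appearing in the lemma, modulo a factor of $\tau_p$. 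Substituting into the LMMSE formula cancels the $\tau_p$ factors and yields precisely \eqref{eq:ChannelEst}.

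For the covariance of the estimate, the cleanest route is to note that $\hat{\mathbf{h}}_{l'k'}^l$ is a deterministic linear transformation $\mathbf{A}\,\mathbf{y}_{l'k'}^{l,p}$ with $\mathbf{A}=\sqrt{\hat{p}_{l'k'}}\beta_{l'k'}^l d_{l'k'}^l \mathbf{R}_{l'k'}^l\pmb{\Psi}_{l'k'}^l$, so $\mathbb{E}\{\hat{\mathbf{h}}_{l'k'}^l(\hat{\mathbf{h}}_{l'k'}^l)^{\rm H}\}=\mathbf{A}\mathbf{C}_{yy}\mathbf{A}^{\rm H}$. Substituting $\mathbf{C}_{yy}=\tau_p(\pmb{\Psi}_{l'k'}^l)^{-1}$ collapses one factor of $\pmb{\Psi}_{l'k'}^l$, leaving exactly \eqref{eq:CovarianceEst}. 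The only subtle point, and probably the main obstacle a careful reader would flag, is the first step on the single-channel covariance: the double scattering model is not Gaussian, so one must justify the expectation order via conditioning and deterministic trace identities rather than invoking any Gaussian shortcut; everything else is bookkeeping with the pilot reuse set $\mathcal{P}_{l'k'}$.
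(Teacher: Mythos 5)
Your proposal is correct and follows essentially the same route as the paper's Appendix B: compute $\mathbb{E}\{\mathbf{h}_{l'k'}^l(\mathbf{h}_{l'k'}^l)^{\rm H}\}=\beta_{l'k'}^l d_{l'k'}^l\mathbf{R}_{l'k'}^l$ by conditioning on $\mathbf{g}_{l'k'}^l$ and then $\mathbf{G}_{l'k'}^l$, form the cross-covariance and observation covariance via the pilot reuse set, and apply the Bayesian Gauss--Markov formula; the final covariance \eqref{eq:CovarianceEst} then follows from $\mathbf{A}\mathbf{C}_{yy}\mathbf{A}^{\rm H}$ exactly as you write. Your value $\mathbf{C}_{yy}=\tau_p\bigl(\pmb{\Psi}_{l'k'}^l\bigr)^{-1}$ is in fact the correct one (the paper's \eqref{eq:CovMa} displays a factor $\tau_p^{-1}$, which is evidently a typo, since only the factor $\tau_p$ makes the stated estimator \eqref{eq:ChannelEst} and covariance \eqref{eq:CovarianceEst} come out consistently).
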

\begin{proof}
The proof is based on the LMMSE estimation of non-Gaussian random variables \cite{Kay1993a}, but adapted to our framework with the channel vector in \eqref{eq:Channel} and the pilot reuse in \eqref{eq:PilotReuseSet}. The detail proof is available in Appendix~\ref{appendix:ChannelEstPhaseAware}.
\end{proof}
Lemma~\ref{lemma:ChannelEstPhaseAware} shows the concrete expression of the channel estimate of each user together with the statistical information, which are used to formulate the combining vectors and computing the closed-form expression on the uplink ergodic SE hereafter. It should be noticed that our channel estimation considers the influence of coherent interference caused by the pilot contamination in multi-cell Massive MIMO scenarios, which is a generalization of the previous result in \cite{kammoun2019asymptotic,Ye2020} that assumed the orthogonal pilot signals for all the users in a single cell.  Along with the statistical information in Lemma~\ref{corollary:ChannelProperties}, the channel estimates and estimation errors in Lemma~\ref{lemma:ChannelEstPhaseAware} are utilized to compute the closed-form uplink SE expression hereafter. 
\subsection{Uplink Data Transmission}
During the uplink data transmission, user~$k$ in cell~$l$ sends a data symbol $s_{lk}$ with $\mathbb{E} \{ |s_{lk}|^2\} = 1$ and the received data signal $\mathbf{y}_l \in \mathbb{C}^{M}$ at BS~$l$ is a superposition of all the transmitted signals from all the users as
\begin{equation}
\mathbf{y}_l = \sum_{l'=1}^L \sum_{k'=1}^{K} \sqrt{p_{l'k'}} \mathbf{h}_{l'k'}^l s_{l'k'} + \mathbf{n}_l,
\end{equation}
where $p_{l'k'}$ is the transmit power of user~$k'$ in cell~$l'$ assigned to each data symbol and $\mathbf{n}_l$ is additive noise distributed as $\mathcal{CN}(\mathbf{0}, \sigma^2 \mathbf{I}_{M})$. By utilizing a combining vector $\mathbf{v}_{lk} \in \mathbb{C}^{M}$ based on the channel estimates, BS~$l$ decodes the desired signal from user~$k$ in cell~$l$ as
\begin{multline}
\mathbf{v}_{lk}^H \mathbf{y}_l =  \sqrt{p_{lk}} \mathbb{E} \big\{ \mathbf{v}_{lk}^H  \mathbf{h}_{lk}^l \big\} s_{lk} + \sqrt{p_{lk}} \Big( \mathbf{v}_{lk}^H  \mathbf{h}_{lk}^l - \mathbb{E} \big\{ \mathbf{v}_{lk}^H  \mathbf{h}_{lk}^l \big\} \Big) s_{lk} \\
+ \sum_{\substack{k'=1, k' \neq k}}^{K} \mathbf{v}_{lk}^H \sqrt{p_{lk'}} \mathbf{h}_{lk'}^l s_{lk'} \\
+ \sum_{\substack{l'=1, l' \neq l}}^L \sum_{k'=1}^{K} \mathbf{v}_{lk}^H \sqrt{p_{l'k'}} \mathbf{h}_{l'k'}^l s_{l'k'} + \mathbf{v}_{lk}^H \mathbf{n}_l,
\end{multline}
where the first term contains the desired signal by virtue of the channel hardening \cite{Marzetta2016a}. The second term describes the beamforming uncertainty effects, while the remaining terms are mutual interference and noise. As shown in \cite{massivemimobook}, the uplink ergodic SE is obtained by the use-and-then-forget channel capacity bounding technique as
\begin{equation} \label{eq:AchievableRate}
R_{lk} = \left(1 - \frac{\tau_p}{\tau_c}\right) \log_2 \left( 1 + \mathrm{SINR}_{lk} \right), [\mbox{b/s/Hz}],
\end{equation}
where the effective signal-to-interference-and-noise ratio (SINR) value is computed as in \eqref{eq:GeneralSINR}.
\begin{figure*}
\begin{equation} \label{eq:GeneralSINR}
\mathrm{SINR}_{lk} = \frac{p_{lk} \Big| \mathbb{E} \big\{ \mathbf{v}_{lk}^H \mathbf{h}_{lk}^l \big\} \Big|^2 }{ \sum_{l'=1}^L \sum_{k'=1}^{K} p_{l'k'} \mathbb{E} \Big\{ \big| \mathbf{v}_{lk}^H \mathbf{h}_{l'k'}^l \big|^2 \Big\} -  p_{lk} \Big| \mathbb{E} \big\{ \mathbf{v}_{lk}^H \mathbf{h}_{lk}^l \big\} \Big|^2  + \sigma^2 \mathbb{E} \{ \|\mathbf{v}_{lk} \|^2 \}}.
\end{equation}
\hrulefill
\end{figure*}
The expectations in \eqref{eq:GeneralSINR} are taking over all the sources of randomness and \eqref{eq:AchievableRate} is an achievable rate since it is a lower bound on the channel capacity. Furthermore, this achievable rate can be computed numerically for any combining scheme. The main demerit of \eqref{eq:AchievableRate} is high computational complexity since many instantaneous channels need to be gathered such that several expectations can be numerically estimated.
\subsection{Uplink Spectral Efficiency Analysis}
If MR combining is used by each BS, i.e.,$\big(\mathbf{v}_{lk} = \hat{\mathbf{h}}_{lk}^l\big),\forall l,k$, we obtain the closed-form expression for the uplink SE in \eqref{eq:AchievableRate} as shown by Theorem~\ref{Theorem:ClosedFormMR}.\footnote{The framework in this paper can be easily extended to the downlink data transmission.}
\begin{theorem} \label{Theorem:ClosedFormMR}
When BS~$l$ uses the MR combing vector to decode the desired signal from user~$k$ in cell~$l$, the achievable uplink SE obtained in \eqref{eq:AchievableRate} with the closed-form expression of the SINR value computed as
\begin{equation} \label{eq:SINRlkMR}
\mathrm{SINR}_{lk} = \frac{ p_{lk} z_{lk}^l \Big| \mathrm{tr} \big(\mathbf{R}_{lk}^l \pmb{\Psi}_{lk}^l \mathbf{R}_{lk}^l \big) \Big|^2 }{ \mathsf{NI}_{lk} + \mathsf{CI}_{lk} + \mathsf{NO}_{lk} },
\end{equation}
where $\mathsf{NI}_{lk}, \mathsf{CI}_{lk},$ and $\mathsf{NO}_{lk}$ are respectively the non-coherent interference, coherent interference, and noise, which are computed in the closed-form expression as
\begin{align}
&\mathsf{NI}_{lk} = \sum_{l' =1 }^L \sum_{k'=1}^{K}  p_{l'k'} m_{l'k'}^l   \mathrm{tr} \big( \mathbf{R}_{lk}^l \pmb{\Psi}_{lk}^l \mathbf{R}_{lk}^l \mathbf{R}_{l'k'}^l  \big), \label{eq:NIlk} \\
&\mathsf{CI}_{lk} =  \sum_{(l',k') \in \mathcal{P}_{lk} \setminus (l,k)} p_{l'k'} z_{l'k'}^l \Big| \mathrm{tr} \big(  \mathbf{R}_{l'k'}^l \pmb{\Psi}_{lk}^l  \mathbf{R}_{lk}^l  \big)\Big|^2  + \notag  \\
& \sum_{(l',k') \in \mathcal{P}_{lk}} p_{l'k'}     \frac{z_{l'k'}^l \mathrm{tr} \Big(\big(\widetilde{\mathbf{R}}_{l'k'}^l\big)^2 \Big)}{\big(d_{l'k'}^l S_{l'k'}^l\big)^2} \Big| \mathrm{tr} \big(  \mathbf{R}_{l'k'}^l \pmb{\Psi}_{lk}^l  \mathbf{R}_{lk}^l  \big)\Big|^2  + \notag \\
& \sum_{(l',k') \in \mathcal{P}_{lk}} p_{l'k'} z_{l'k'}^l  \frac{ \mathrm{tr} \Big(\big(\widetilde{\mathbf{R}}_{l'k'}^l \big)^2 \Big)}{\big(S_{l'k'}^l\big)^2} \mathrm{tr} \left(  \mathbf{R}_{l'k'}^l \pmb{\Psi}_{lk}^l  \mathbf{R}_{lk}^l  \mathbf{R}_{l'k'}^l \mathbf{R}_{lk}^l \pmb{\Psi}_{lk}^l   \right),  \label{eq:CIlk}\\
&\mathsf{NO}_{lk} =  \sigma^2 \hat{p}_{lk} \big(\beta_{lk}^l \big)^2 \big(d_{lk}^l \big)^2 \tau_p \mathrm{tr} \big(\mathbf{R}_{lk}^l \pmb{\Psi}_{lk}^l \mathbf{R}_{lk}^l \big),
\end{align}
with the values $m_{l'k'}^l$ and $z_{l'k'}^l, \forall l',k',l,$ defined as
\begin{align}
&m_{l'k'}^l = \beta_{l'k'}^l d_{l'k'}^l \hat{p}_{lk} \big(\beta_{lk}^l \big)^2 \big(d_{lk}^l \big)^2 \tau_p,\\
&z_{l'k'}^l = \hat{p}_{l'k'}  \big(\beta_{l'k'}^l \big)^2 \big(d_{l'k'}^l \big)^2 \hat{p}_{lk} \big(\beta_{lk}^l \big)^2 \big( d_{lk}^l \big)^2 \tau_p^2.
\end{align}
\end{theorem}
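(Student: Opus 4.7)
The plan is to substitute $\mathbf{v}_{lk}=\hat{\mathbf{h}}_{lk}^l$ into~\eqref{eq:GeneralSINR} and to evaluate every expectation using Lemmas~\ref{corollary:ChannelProperties} and~\ref{lemma:ChannelEstPhaseAware}, then reassemble the result into the $\mathsf{NI}_{lk}+\mathsf{CI}_{lk}+\mathsf{NO}_{lk}$ form. For the numerator, the LMMSE orthogonality property yields $\mathbb{E}\{\hat{\mathbf{h}}_{lk}^l(\mathbf{h}_{lk}^l)^{\rm H}\}=\mathbb{E}\{\hat{\mathbf{h}}_{lk}^l(\hat{\mathbf{h}}_{lk}^l)^{\rm H}\}$, so taking the trace of~\eqref{eq:CovarianceEst} gives $\mathbb{E}\{(\hat{\mathbf{h}}_{lk}^l)^{\rm H}\mathbf{h}_{lk}^l\}=\hat{p}_{lk}(\beta_{lk}^l)^2(d_{lk}^l)^2\tau_p\,\mathrm{tr}(\mathbf{R}_{lk}^l\pmb{\Psi}_{lk}^l\mathbf{R}_{lk}^l)$; squaring this real scalar and multiplying by $p_{lk}$ produces the stated numerator $p_{lk}z_{lk}^l|\mathrm{tr}(\mathbf{R}_{lk}^l\pmb{\Psi}_{lk}^l\mathbf{R}_{lk}^l)|^2$. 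The noise term $\mathsf{NO}_{lk}$ similarly follows from the trace of~\eqref{eq:CovarianceEst}.

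The bulk of the work lies in the cross moments $\mathbb{E}\{|(\hat{\mathbf{h}}_{lk}^l)^{\rm H}\mathbf{h}_{l'k'}^l|^2\}$ for each $(l',k')$. Inserting the expression for $\hat{\mathbf{h}}_{lk}^l$ from~\eqref{eq:ChannelEst} and setting $\mathbf{B}=\pmb{\Psi}_{lk}^l\mathbf{R}_{lk}^l$, I would separate the computation according to pilot reuse. For $(l',k')\notin\mathcal{P}_{lk}$, the channel $\mathbf{h}_{l'k'}^l$ is independent of $\mathbf{y}_{lk}^{l,p}$, so conditioning on $\mathbf{h}_{l'k'}^l$ and using the identity $\mathbb{E}\{\mathbf{y}_{lk}^{l,p}(\mathbf{y}_{lk}^{l,p})^{\rm H}\}=\tau_p(\pmb{\Psi}_{lk}^l)^{-1}$ (read off from~\eqref{eq:ylklp} and the pilot-reuse structure) reduces this moment to $m_{l'k'}^l\,\mathrm{tr}(\mathbf{R}_{lk}^l\pmb{\Psi}_{lk}^l\mathbf{R}_{lk}^l\mathbf{R}_{l'k'}^l)$, which is the generic summand of $\mathsf{NI}_{lk}$.

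For $(l',k')\in\mathcal{P}_{lk}$ I would isolate the dependent contribution by decomposing $\mathbf{y}_{lk}^{l,p}=\sqrt{\hat{p}_{l'k'}}\tau_p\mathbf{h}_{l'k'}^l+\tilde{\mathbf{y}}$, where $\tilde{\mathbf{y}}$ is independent of $\mathbf{h}_{l'k'}^l$. Expanding the squared modulus of the resulting sum, the cross term vanishes in expectation because $\tilde{\mathbf{y}}$ is zero-mean and independent of $\mathbf{h}_{l'k'}^l$. The self-quadratic $\hat{p}_{l'k'}\tau_p^2\mathbb{E}\{|(\mathbf{h}_{l'k'}^l)^{\rm H}\mathbf{B}\mathbf{h}_{l'k'}^l|^2\}$ is handled by the fourth-moment identity~\eqref{eq:4moments}, which generates the $z_{l'k'}^l$ and $\mathrm{tr}((\widetilde{\mathbf{R}}_{l'k'}^l)^2)$-dependent terms of $\mathsf{CI}_{lk}$. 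The $|\tilde{\mathbf{y}}^{\rm H}\mathbf{B}\mathbf{h}_{l'k'}^l|^2$ contribution is evaluated as in the independent case, but now with $\mathbb{E}\{\tilde{\mathbf{y}}\tilde{\mathbf{y}}^{\rm H}\}=\tau_p(\pmb{\Psi}_{lk}^l)^{-1}-\tau_p a_{l'k'}^l\mathbf{R}_{l'k'}^l$; the first summand again yields an $m_{l'k'}^l$-type contribution that feeds $\mathsf{NI}_{lk}$, while the correction $-\tau_p a_{l'k'}^l\mathbf{R}_{l'k'}^l$ produces a term that exactly cancels one of the pieces generated by~\eqref{eq:4moments}, removing the would-be double counting of coherent interference.

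The last step is to collect all contributions. The generic $m_{l'k'}^l\,\mathrm{tr}(\cdot)$ summands from both cases combine into the full sum over $(l',k')$ that constitutes $\mathsf{NI}_{lk}$. The surviving fourth-moment contributions, restricted to indices $(l',k')\in\mathcal{P}_{lk}$, form $\mathsf{CI}_{lk}$, after observing that the $(l',k')=(l,k)$ piece of the first subsum is removed by the subtraction of $p_{lk}|\mathbb{E}\{(\hat{\mathbf{h}}_{lk}^l)^{\rm H}\mathbf{h}_{lk}^l\}|^2$ mandated by~\eqref{eq:GeneralSINR}. The hardest aspect I anticipate is the careful bookkeeping in the pilot-sharing case: tracking the cancellation between the $a_{l'k'}^l\mathbf{R}_{l'k'}^l$ correction in $\mathbb{E}\{\tilde{\mathbf{y}}\tilde{\mathbf{y}}^{\rm H}\}$ and the fourth-moment expansion, and using Hermitian cyclic trace identities to bring every surviving term into the canonical forms $|\mathrm{tr}(\mathbf{R}_{l'k'}^l\pmb{\Psi}_{lk}^l\mathbf{R}_{lk}^l)|^2$ and $\mathrm{tr}(\mathbf{R}_{l'k'}^l\pmb{\Psi}_{lk}^l\mathbf{R}_{lk}^l\mathbf{R}_{l'k'}^l\mathbf{R}_{lk}^l\pmb{\Psi}_{lk}^l)$ that appear in~\eqref{eq:CIlk}.
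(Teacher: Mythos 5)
Your proposal is correct and follows essentially the same route as the paper's proof in Appendix~C: substitute the MR combiner, use the LMMSE orthogonality for the numerator and noise terms, split the interference by membership in $\mathcal{P}_{lk}$, and apply \eqref{eq:Cor1eq9} and \eqref{eq:4moments} to the pilot-sharing terms. The only cosmetic difference is that you package the non-self pilot-sharing channels and noise into a single vector $\tilde{\mathbf{y}}$ and use its covariance $\tau_p(\pmb{\Psi}_{lk}^l)^{-1}-\tau_p a_{l'k'}^l\mathbf{R}_{l'k'}^l$, whereas the paper enumerates those contributions term by term before recombining them; the resulting cancellation of the $z_{l'k'}^l\,\mathrm{tr}(\mathbf{R}_{l'k'}^l\pmb{\Psi}_{lk}^l\mathbf{R}_{lk}^l\mathbf{R}_{l'k'}^l\mathbf{R}_{lk}^l\pmb{\Psi}_{lk}^l)$ piece that you identify is exactly the algebra the paper performs in assembling \eqref{eq:SecondofFirst}.
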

\begin{proof}
The proof is obtained by computing the expectations of non-Gaussian random variables in \eqref{eq:GeneralSINR}. The detailed proof is available in Appendix~\ref{Appendix:ClosedFormMR}.
\end{proof}
The SINR expression \eqref{eq:SINRlkMR} is explicitly influenced by many factors such as channel covariance matrices, the number of scatters, pilot reuse, channel estimation quality, which are hidden in the general formulation \eqref{eq:GeneralSINR}. Specifically, the numerator of  \eqref{eq:SINRlkMR} shows the contribution of both channel estimation quality and covariance matrix of user~$k$ in cell~$l$.  Moreover, the effectiveness of the array gain is verified since the numerator scales up with the number of antennas thanks to the spatial covariance property in \eqref{eq:Assump1}. The first part in the denominator of \eqref{eq:SINRlkMR} demonstrates the degradation of the received signal quality due to non-coherent interference. The second part presents the contributions of coherent interference caused by reusing the pilot signals among the users that is defined by the pilot reuse set $\mathcal{P}_{lk}$. Unlike previous works with many scatterers \cite{Chien2018a}, this part also shows that a small number of scatterers have significant contributions to increasing non-coherent interference. If the coherent blocks are large enough such that pilot sequences allocated to all users are pairwisely orthogonal, i.e., $\tau_p \geq LK$, the SINR value of user~$k$ in cell~$l$ is still computed as \eqref{eq:SINRlkMR}, but the following parameters are reformulated as
\begin{align}
\pmb{\Psi}_{lk}^l &= \left(  a_{lk}^l \mathbf{R}_{lk}^l + \sigma^2 \mathbf{I}_{M} \right)^{-1},\\
\mathsf{CI}_{lk} &= p_{lk}     \frac{z_{lk}^l \mathrm{tr} \Big(\big(\widetilde{\mathbf{R}}_{lk}^l\big)^2 \Big)}{\big(d_{lk}^l S_{lk}^l\big)^2} \Big| \mathrm{tr} \big(  \mathbf{R}_{lk}^l \pmb{\Psi}_{lk}^l  \mathbf{R}_{lk}^l  \big)\Big|^2  + \notag \\
&  p_{lk} z_{lk}^l  \frac{ \mathrm{tr} \Big(\big(\widetilde{\mathbf{R}}_{lk}^l \big)^2 \Big)}{\big(S_{lk}^l\big)^2} \mathrm{tr} \left(  \mathbf{R}_{lk}^l \pmb{\Psi}_{lk}^l  \mathbf{R}_{lk}^l  \mathbf{R}_{lk}^l \mathbf{R}_{lk}^l \pmb{\Psi}_{lk}^l   \right), 
\end{align}
which demonstrates the influences of a finite scatterer number to the uplink SE. Finally, the last part in the denominator of \eqref{eq:SINRlkMR} presents the additive noise effects.
\begin{remark}
We consider the MR combining technique  due to its low computational complexity. This linear combining technique allows the execution of SE analysis in the closed form with a finite set of BS antennas, users, and scatterers. In addition, it can be implemented by only using the local channel state information, and therefore, easy to implement in a distributed manner.
\end{remark}
\subsection{Asymptotic Analysis}
In order to observe the uplink SE at an asymptotic regime and also compare with previous works, we now investigate the uplink asymptotic SE of each user when $M \rightarrow \infty$ and $S_{l'k'}^l \rightarrow \infty, \forall l,l',k'$. Aligned with previous works \cite{van2020massivechannels}, the general preliminary settings on the covariance matrices are given in Assumption~\ref{Assumption1}. 
\begin{assumption} \label{Assumption1}
For $l,l'=1, \ldots, L$ and $k'=1, \ldots, K,$ the spatial covariance matrices $\mathbf{R}_{l'k'}^l$ and $\widetilde{\mathbf{R}}_{l'k'}^l$ satisfy
\begin{align}
& \limsup_{M} \big\| \mathbf{R}_{l'k'}^l  \big\|_2 <  \infty, \, \, \liminf_{M} \frac{ \mathrm{tr} \big(\mathbf{R}_{l'k'}^l \big)}{ M} > 0, \label{eq:Assump1}\\
& \limsup_{S_{l'k'}^l} \big\| \widetilde{\mathbf{R}}_{l'k'}^l  \big\|_2 <  \infty, \, \, \liminf_{S_{l'k'}^l} \frac{ \mathrm{tr} \big(\widetilde{\mathbf{R}}_{l'k'}^l \big)}{ S_{l'k'}^l} > 0.
\end{align}
\end{assumption}
Assumption~\ref{Assumption1} is established based on the fact that a double scattering channel has two covariance matrices on the definition. This assumption is extended from the standard form in the asymptotic analysis for Massive MIMO communications with a single covariance matrix \cite{Hoydis2013a}. Physically, the gathered signal energy at a BS originates from many spatial directions and is proportional to the number of antennas. We also utilize the spatial orthogonality between two covariance matrices to seek for a convergence point at the asymptotic regime as shown in Definition~\ref{Def:Orthogonal}.
\begin{definition} \label{Def:Orthogonal}
The two covariance matrices $\mathbf{R}_{l'k'}^l$ and $\mathbf{R}_{lk}^l, \forall l',l,k,k'$ are asymptotically spatially orthogonal if
\begin{equation} \label{eq:Orthogonal}
\frac{1}{M} \mathrm{tr} \Big( \mathbf{R}_{l'k'}^l \mathbf{R}_{lk}^l \Big) \rightarrow 0, M \rightarrow \infty.
\end{equation}
\end{definition}
As pointed out in  previous works \cite{Bjornson2017bo, Yin2013a}, the condition \eqref{eq:Orthogonal} indicates the two users having orthogonal correlation eigenspaces. This holds for a network where each BS is equipped with antennas in a uniform linear array and the supports of the multi-path angular distributions of the two users are strictly non-overlapping. The convergence of the uplink SE for each user is stated in Theorem~\ref{theorem:Asymptotic}.
\begin{theorem} \label{theorem:Asymptotic}
Under Assumption~\ref{Assumption1}, the uplink SE of user~$k$ in cell~$l$ can be asymptotically observed by the following cases:
\begin{itemize}
\item[a)] As $M \rightarrow \infty$ and a given set of finite scatterers, the achievable rate of user~$k$ in cell~$l$ converges to
\begin{multline} \label{eq:AsymptRatev1}
R_{lk} = \left( 1 - \frac{\tau_p}{\tau_c} \right) \times \\ \log_2 \left( 1 + \frac{ p_{lk} z_{lk}^l \Big| \mathrm{tr} \big(\mathbf{R}_{lk}^l \pmb{\Psi}_{lk}^l \mathbf{R}_{lk}^l \big) \Big|^2 }{ \mathsf{CI}_{lk} } \right), [\mbox{b/s/Hz}].
\end{multline}
\item[\textit{b)}]   As $M \rightarrow \infty$, a limited number of scatterers, and the two covariance matrices $\mathbf{R}_{l'k'}^l$ and $\mathbf{R}_{lk}^l$ are asymptotically orthogonal for all $(l',k') \in \mathcal{P}_{lk} \setminus (l,k)$, the achievable rate of user~$k$ in cell~$l$ converges to
\begin{equation} \label{eq:Limitb}
R_{lk} = \left(1 - \frac{\tau_p}{\tau_c} \right) \log_2 \left( 1 + \frac{ \big( d_{lk}^l S_{lk}^l\big)^2 }{ \mathrm{tr} \Big( \big( \widetilde{\mathbf{R}}_{l'k'}^l \big)^2 \Big)} \right),  [\mbox{b/s/Hz}].
\end{equation}
\item[\textit{c)}]   As $M \rightarrow \infty$ and $S_{l'k'}^l \rightarrow \infty, \forall l',k' \in \mathcal{P}_{lk},$ the achievable rate of user~$k$ in cell~$l$ converges to
\begin{multline} \label{eq:AsymptRatev3}
R_{lk} = \left( 1 - \frac{\tau_p}{\tau_c} \right) \times \\
 \log_2 \left( 1 + \frac{ p_{lk} z_{lk}^l \Big| \mathrm{tr} \big(\mathbf{R}_{lk}^l \pmb{\Psi}_{lk}^l \mathbf{R}_{lk}^l \big) \Big|^2 }{ \widetilde{\mathsf{CI}}_{lk} } \right),  [\mbox{b/s/Hz}],
\end{multline}
where $\widetilde{\mathsf{CI}}_{lk} = \sum_{(l',k') \in \mathcal{P}_{lk} \setminus (l,k)} p_{l'k'} z_{l'k'}^l \left| \mathrm{tr} \left(  \mathbf{R}_{l'k'}^l \pmb{\Psi}_{lk}^l  \mathbf{R}_{lk}^l  \right)\right|^2 $.
\item[\textit{d)}] As $M \rightarrow \infty$,  $S_{l'k'}^l \rightarrow \infty, \forall l',k' \in \mathcal{P}_{lk}$, and the two covariance matrices $\mathbf{R}_{l'k'}^l$ and $\mathbf{R}_{lk}^l$ are asymptotically orthogonal for all $(l',k') \in \mathcal{P}_{lk} \setminus (l,k)$, the achievable rate of user~$k$ in cell~$l$ grows without bound as
\begin{equation} \label{eq:AsymptRatev4}
R_{lk} \rightarrow \infty, [\mbox{b/s/Hz}].
\end{equation}
\end{itemize}
\end{theorem}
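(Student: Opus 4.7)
The plan is to analyze each term of the closed-form SINR in \eqref{eq:SINRlkMR} separately, identify its order of growth in $M$ and in the scatterer counts $S_{l'k'}^l$, and then take the four stated limits. Before looking at individual cases, I would record a scaling fact implied by Assumption~\ref{Assumption1}: since $\|\mathbf{R}_{l'k'}^l\|_2$ is uniformly bounded in $M$, the matrix $\pmb{\Psi}_{lk}^l$ also has uniformly bounded spectral norm (its inverse is a sum of bounded matrices plus $\sigma^2\mathbf{I}_M$) and satisfies $\pmb{\Psi}_{lk}^l\succeq c\,\mathbf{I}_M$ for some $c>0$. Combined with $\mathrm{tr}(\mathbf{R}_{l'k'}^l)/M$ bounded below, this forces $\mathrm{tr}(\mathbf{R}_{lk}^l\pmb{\Psi}_{lk}^l\mathbf{R}_{lk}^l)=\Theta(M)$, and hence $|\mathrm{tr}(\mathbf{R}_{lk}^l\pmb{\Psi}_{lk}^l\mathbf{R}_{lk}^l)|^2=\Theta(M^2)$; every single-trace expression appearing in $\mathsf{NI}_{lk}$, $\mathsf{NO}_{lk}$, and the third sum of $\mathsf{CI}_{lk}$ is $O(M)$; and each squared-trace term in the first two sums of $\mathsf{CI}_{lk}$ is $O(M^2)$. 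Analogously, from $\|\widetilde{\mathbf{R}}_{l'k'}^l\|_2=O(1)$ one obtains $\mathrm{tr}((\widetilde{\mathbf{R}}_{l'k'}^l)^2)/(S_{l'k'}^l)^2=O(1/S_{l'k'}^l)\to 0$.

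With this bookkeeping in hand, the four cases are routine. For part (a), dividing numerator and denominator of \eqref{eq:SINRlkMR} by $M^2$ makes $\mathsf{NI}_{lk}$, $\mathsf{NO}_{lk}$, and the $O(M)$ third sum of $\mathsf{CI}_{lk}$ vanish while the $\Theta(M^2)$ terms in $\mathsf{CI}_{lk}$ persist, yielding \eqref{eq:AsymptRatev1}. For part (c), the extra assumption $S_{l'k'}^l\to\infty$ forces the prefactors $\mathrm{tr}((\widetilde{\mathbf{R}}_{l'k'}^l)^2)/(S_{l'k'}^l)^2$ in the second and third sums of $\mathsf{CI}_{lk}$ to tend to zero, so only the first (classical pilot-contamination) sum remains, which is exactly \eqref{eq:AsymptRatev3}. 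For part (b), the orthogonality condition \eqref{eq:Orthogonal} is propagated through $\pmb{\Psi}_{lk}^l$ by a Cauchy--Schwarz trace inequality of the form $|\mathrm{tr}(\mathbf{R}_{l'k'}^l\pmb{\Psi}_{lk}^l\mathbf{R}_{lk}^l)|^2\le\mathrm{tr}(\mathbf{R}_{lk}^l\pmb{\Psi}_{lk}^l)\,\mathrm{tr}(\pmb{\Psi}_{lk}^l\mathbf{R}_{l'k'}^l\mathbf{R}_{lk}^l\mathbf{R}_{l'k'}^l)$, in which the first factor is $O(M)$ and the second is at most $\|\pmb{\Psi}_{lk}^l\|_2\,\|\mathbf{R}_{l'k'}^l\|_2\,\mathrm{tr}(\mathbf{R}_{lk}^l\mathbf{R}_{l'k'}^l)=o(M)$; hence every cross trace for $(l',k')\in\mathcal{P}_{lk}\setminus(l,k)$ is $o(M)$, which eliminates the first sum of $\mathsf{CI}_{lk}$ together with the off-diagonal contributions of the second and third sums. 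Only the self-term $(l',k')=(l,k)$ of the second sum survives, and cancelling the common factor $p_{lk}z_{lk}^l|\mathrm{tr}(\mathbf{R}_{lk}^l\pmb{\Psi}_{lk}^l\mathbf{R}_{lk}^l)|^2$ between numerator and denominator collapses the SINR to $(d_{lk}^lS_{lk}^l)^2/\mathrm{tr}((\widetilde{\mathbf{R}}_{lk}^l)^2)$, matching \eqref{eq:Limitb}. Part (d) then follows immediately: orthogonality kills the pilot-contamination sum and $S_{l'k'}^l\to\infty$ kills the remaining self-interference sums, so the denominator is of strictly smaller order than the $\Theta(M^2)$ numerator and the rate grows without bound.

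The main obstacle I expect is the rigorous propagation of the weak orthogonality \eqref{eq:Orthogonal} through $\pmb{\Psi}_{lk}^l$ in parts (b) and (d), in particular for the off-diagonal terms of the third sum of $\mathsf{CI}_{lk}$, which involve the longer trace $\mathrm{tr}(\mathbf{R}_{l'k'}^l\pmb{\Psi}_{lk}^l\mathbf{R}_{lk}^l\mathbf{R}_{l'k'}^l\mathbf{R}_{lk}^l\pmb{\Psi}_{lk}^l)$ and require iterating the above Cauchy--Schwarz bound or invoking the stronger eigenspace-disjoint reading of \eqref{eq:Orthogonal} discussed after Definition~\ref{Def:Orthogonal}. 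Once that transfer is justified, the remainder is pure algebraic simplification of the surviving leading-order terms in each case.
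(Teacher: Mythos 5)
Your proposal is correct and follows essentially the same route as the paper's proof: normalize the closed-form SINR by the $\Theta(M^2)$ scale of its numerator, show that $\mathsf{NI}_{lk}$, $\mathsf{NO}_{lk}$, and the single-trace parts of $\mathsf{CI}_{lk}$ are only $O(M)$ and therefore vanish, and then identify which squared-trace terms of $\mathsf{CI}_{lk}$ survive in each of the four regimes. If anything you are more explicit than the paper in parts (b) and (d), where your Cauchy--Schwarz step $|\mathrm{tr}(\mathbf{R}_{l'k'}^l\pmb{\Psi}_{lk}^l\mathbf{R}_{lk}^l)|^2\le \mathrm{tr}(\mathbf{R}_{lk}^l\pmb{\Psi}_{lk}^l)\,\mathrm{tr}(\pmb{\Psi}_{lk}^l\mathbf{R}_{l'k'}^l\mathbf{R}_{lk}^l\mathbf{R}_{l'k'}^l)$ supplies the propagation of the orthogonality condition \eqref{eq:Orthogonal} through $\pmb{\Psi}_{lk}^l$ that the paper's appendix leaves implicit.
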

\begin{proof}
The proof is to compute the asymptotic SE of each user in the network with Assumption~\ref{Assumption1} and Definition~\ref{Def:Orthogonal} when the number of antennas and/or scatterers increases. The detailed proof is available in Appendix~\ref{Appendix:Asymptotic}.
\end{proof}	
Theorem~\ref{theorem:Asymptotic} reveals that the uplink SE at an asymptotic regime is dependent on both the number of antennas at each BS and scatterers in propagation environments as well. For a limited number of scatterers at each communication link, the uplink SE of user~$k$ in cell~$l$ is bounded when the number of antennas increases due to the pilot contamination effects. Different from \cite{ozdogan2019massive}, the SE converges to a finite point as shown \eqref{eq:Limitb} even when the asymptotically orthogonality among covariance matrices holds because of lacking the scatterers. For a rich scattering environment, the limitation is mainly from reusing the pilot signals among users causing coherent interference, which is dominant at an asymptotic regime. The fundamental difference of the double scattering channels compared with other spatial fading models as correlated Rayleigh fading or local scattering fading is that the unbounded channel capacity is obtained when the covariance matrices are asymptotically orthogonal as well as both numbers of antennas at each BS and scatterers go asymptotically.

\section{Uplink Total Data Energy Consumption Minimization} \label{Sec:Optimization}
This section expresses an uplink energy consumption minimization problem by assuming that user~$k$ in cell~$l$ requests a SE $\xi_{lk} > 0, \forall l,k,$ and has a maximum power $P_{\max,lk} >0$. Investigating this optimization problem, we further manifest the feasibility for user locations, where all the users are served with the requested SE under the limited power budget. In contrast, the infeasibility is manifested for certain user locations, where users may be served with the SE lower than what has been requested.
\subsection{Problem Formulation}
The main goal of 5G-and-beyond systems is to provide the high SEs to all users with a minimal power consumption. In this paper, we formulate a total data energy optimization problem for the uplink data transmission as follows  
\begin{equation} \label{Prob:TotalPowerOptv1}
	\begin{aligned}
		& \underset{\{ p_{lk} \geq 0 \} }{\mathrm{minimize}}
		&&   (\tau_c - \tau_p) \sum_{ l=1}^L \sum_{k=1}^K p_{lk} \\
		& \,\,\mathrm{subject \,to}
		& &  R_{lk} \geq \xi_{lk}, \forall l,k, \\
		& & & p_{lk} \leq P_{\max,lk}, \forall l,k,
	\end{aligned}
\end{equation}
where $ P_{\max,lk}$ is the maximum power level that user~$k$ in cell~$l$ can allocate to each data symbol. Problem~\eqref{Prob:TotalPowerOptv1} constrains on the rate requirement and limited power budget of each user. The per-user power constraints implicitly indicate that the total transmit power in the network should be upper bounded. In addition, the objective function of problem~\eqref{Prob:TotalPowerOptv1} ensures the minimal network power consumption. Therefore, our proposed optimization problem is able to reduce the mutual interference on other networks. 
\begin{remark}  
Note that, in \eqref{Prob:TotalPowerOptv1}, we consider the per-user power constraints. It is also interesting to additionally consider a network power constraint so that the mutual interference on other networks can be controlled more effectively. For this case, the feasibility of our optimization problem is a main issue. We may first check if the network power constraint would be active in the selected point, i.e., if the network power constraint is satisfied under the optimized individual constraints. If it is inactive, the solution remains unaffected. If it is active, a heuristic approach would be to reduce the number of users, increase the number of antennas, or relax the per-user SE requirements. This potential extension is left for the future work. In this paper, we assume that the network power constraint is always satisfied and only handling a scenario that the per-user powers are constrained.
\end{remark}
By setting $\nu_{lk} = 2^{\xi_{lk} \tau_c /(\tau_c - \tau_p)} - 1$ and removing the constant $\tau_c - \tau_p$ in the objective function, problem~\eqref{Prob:TotalPowerOptv1} is converted from the SE constraints into the equivalent SINR constraints as
\begin{equation} \label{Prob:TotalPowerOptv2}
	\begin{aligned}
		& \underset{\{ p_{lk} \geq 0 \} }{\mathrm{minimize}}
		&&    \sum_{ l=1}^L \sum_{k=1}^K p_{lk} \\
		& \,\,\mathrm{subject \,to}
		& &  \mathrm{SINR}_{lk} \geq \nu_{lk}, \forall l,k, \\
		& & & p_{lk} \leq P_{\max,lk}, \forall l,k.
	\end{aligned}
\end{equation}
Instead of optimizing the energy consumption as \eqref{Prob:TotalPowerOptv1}, problem~\eqref{Prob:TotalPowerOptv2} minimizes the total transmit powers, which all users consume for the uplink data transmission. Due to the universe of all SINR expressions $\{ \mathrm{SINR}_{lk} \}$, problem~\eqref{Prob:TotalPowerOptv2} is in a general form for any combining technique. We now focus on MR combining technique as the corresponding SINRs have been derived in closed-form as obtained in Theorem~\ref{Theorem:ClosedFormMR}. The concrete optimization problem is reformulated by utilizing the SINR expression \eqref{eq:SINRlkMR} into \eqref{Prob:TotalPowerOptv2} as
\begin{equation} \label{Prob:TotalPowerOptv3}
	\begin{aligned}
		& \underset{\{ p_{lk} \geq 0 \} }{\mathrm{minimize}}
		&&    \sum_{ l=1}^L \sum_{k=1}^K p_{lk} \\
		& \,\,\mathrm{subject \,to}
		& &   \frac{ p_{lk} z_{lk}^l \left| \mathrm{tr} \left(\mathbf{R}_{lk}^l \pmb{\Psi}_{lk}^l \mathbf{R}_{lk}^l \right) \right|^2 }{ \mathsf{NI}_{lk} + \mathsf{CI}_{lk} + \mathsf{NO}_{lk}} \geq \nu_{lk}, \forall l,k, \\
		& & & p_{lk} \leq P_{\max,lk}, \forall l,k.
	\end{aligned}
\end{equation}
We stress that problem~\eqref{Prob:TotalPowerOptv3} jointly optimizes the powers to satisfy the requested SINRs from all the users. The required SINR levels $\nu_{lk}, \forall l,k,$ are distinct from each other in practice and the global optimum is only found when all the users are simultaneously served by the required SEs. This problem can be either feasible or infeasible for a given set of user locations and shadow fading realizations as presented hereafter.\footnote{The congestion issue may appear in the other optimization problems as the spectral or energy efficiency maximization subject to the SE requirements and/nor the limited power budget constraints. The key argument of our framework is to point out that many users might still be served with their SE requirements in Massive MIMO communications if there is a strategic policy to deal with a few unsatisfied users.}
\subsection{Feasible and Infeasible Problems} \label{SubSec:FeasInfeas}
When problem~\eqref{Prob:TotalPowerOptv3} has a non-empty feasible set meaning that the network is able to simultaneously provide the required SEs to all the users conditioned on the power constraints. We can find the global optimal solution to  problem~\eqref{Prob:TotalPowerOptv3}. Indeed, the objective function is a linear combination of all the power variables $\{ p_{lk}\}, \forall l,k$. In addition, the power budget constraint functions are affine while the SINR constraints, $\forall l,k,$ are reformulated as
\begin{equation}
	\nu_{lk} \mathsf{NI}_{lk} + \nu_{lk} \mathsf{CI}_{lk} + \nu_{lk} \mathsf{NO}_{lk} \leq  p_{lk} z_{lk}^l \left| \mathrm{tr} \left(\mathbf{R}_{lk}^l \pmb{\Psi}_{lk}^l \mathbf{R}_{lk}^l \right) \right|^2, 
\end{equation}
which are also affine functions. Consequently, \eqref{Prob:TotalPowerOptv3} is a linear program on standard form \cite{Boyd2004a}. We hence enable to solve \eqref{Prob:TotalPowerOptv3} to the global optimality in polynomial time, for instance, utilizing a general interior-point optimization toolbox as CVX \cite{cvx2015}. Problem~\eqref{Prob:TotalPowerOptv3} includes the $KL$ optimization variables and the $2KL$ constraints and as such it has the computational complexity of the order $\mathcal{O}\left( N_i 2K^3L^3 \right)$, where $N_i$ is the number of Newton iterations needed to obtain a predetermined precision, typically in the order of tens \cite[Chapter 11]{Boyd2004a}. It should be noticed that all the $KL$ users will spend non-zero data powers at the global optimum when problem~\eqref{Prob:TotalPowerOptv3} is feasible owning to the  non-zero SE requirements.

For a specific realization of user locations and the power budgets, there may be a situation that all the users cannot be simultaneously served by the SE requirements. We emphasize that only one unfortunate user served with a lower SE suffices to create an empty feasible domain for the total transmit power optimization problem. Alternatively, problem~\eqref{Prob:TotalPowerOptv3} lacks a feasible solution \cite[Section 4.1]{Boyd2004a}. The unsatisfied SE is caused by high mutual interference in cellular networks and/or extreme locations as the cell edge leading to some users having a weak channel. Moreover, a user may require a too high SE for which the system cannot provide this service even spending maximum data power. Fortunately, a feasible solution of the data powers might still exist for most of the users with the required SEs, while only one or a few users are \textit{unsatisfied}. Consequently, it may be sufficient to remove or reduce the required SEs of those unsatisfied users to convert an infeasible problem to a feasible one. However, it is not trivial to identify which users are unsatisfied to completely remove during solving problem~\eqref{Prob:TotalPowerOptv3}. As one of the main contributions, this paper develops the power allocation strategies to handle such infeasible instances by allowing the corresponding SINR constraints to be violated.

\section{Congestion solution based on alternating optimization} \label{Sec:Solutions}
This section proposes the two algorithms attaining a fixed-point solution to problem~\eqref{Prob:TotalPowerOptv3} with either empty or non-empty feasible set. When the feasible set is empty, the SINR constraints of users, which potentially make the congestion issue are relaxed: The first approach is spending the maximum power on unsatisfied users. In contrast, the second approach is reducing the data power of those unsatisfied users. We now introduce important notations which will be widely utilized in this paper to construct the proposed algorithms as shown in Definition~\ref{Def2}.
\begin{definition} \label{Def2}
Let us denote $\mathbf{z}$ and $\mathbf{z}'$ the real vectors of size $KL \times 1$, for which the $n$-th elements are $z_n$ and $z_n'$, respectively. The notation $\mathbf{z} \succeq \mathbf{z}'$ indicates element-wise inequality $z_n \geq z_n', \forall n = 1, \ldots, KL$. Meanwhile, the notation $\mathbf{z} \preceq  \mathbf{z}'$ indicates $z_n \leq z_n', \forall n = 1, \ldots, KL$.
\end{definition}
\subsection{Spending Maximum Transmit Power on Unsatisfied Users}
For the glorification of simplification in comprehension, problem~\eqref{Prob:TotalPowerOptv3} with a non-empty feasible domain is first considered. We stack all the data powers into a vector $\mathbf{p} = [p_{11}, \ldots, p_{LK}]^T \in \mathbb{R}_{+}^{LK}$, then the SINR constraint of user~$k$ in cell~$l$ is reformulated as
\begin{equation}
p_{lk} \geq I_{lk} (\mathbf{p}),
\end{equation}
where $I_{lk} (\mathbf{p})$ is so-called a standard interference function, which is given by
\begin{equation} \label{eq:Ilk}
I_{lk} (\mathbf{p}) = \frac{\nu_{lk} \mathsf{NI}_{lk} (\mathbf{p}) + \nu_{lk} \mathsf{CI}_{lk} (\mathbf{p}) + \nu_{lk} \mathsf{NO}_{lk} }{ z_{lk}^l \left| \mathrm{tr} \left(\mathbf{R}_{lk}^l \pmb{\Psi}_{lk}^l \mathbf{R}_{lk}^l \right) \right|^2 }.
\end{equation}
In \eqref{eq:Ilk}, the detailed expressions of $\mathsf{NI}_{lk} (\mathbf{p})$ and $\mathsf{CI}_{lk} (\mathbf{p})$ have been already expressed in \eqref{eq:NIlk} and \eqref{eq:CIlk}, but we here emphasize them as the functions of data power variables stacked in $\mathbf{p}$. We now introduce the definition of a standard interference function for which an low complexity algorithm to obtain a fixed point solution is proposed.
\begin{definition}[Standard interference function] \label{Def:TypeI}
A function $I(\mathbf{z})$ is a standard interference function for all $\mathbf{z} \succeq \mathbf{0}$, if the following properties hold:
$a)$ Positivity $I(\mathbf{z}) >0, \forall \mathbf{z} \succeq 0$. $b)$ Monotonicity $I(\mathbf{z})  \geq I(\mathbf{z}')$ if $\mathbf{z} \succeq \mathbf{z}'$. $c)$ Scalability: $\alpha I(\mathbf{z}) > I(\alpha \mathbf{z}), \forall \alpha > 1,$ for all scalar $\alpha > 1$.
\end{definition}
The positivity property is because of the inherent mutual interference and thermal noise in the system, which implies a non-zero value. This means that the transmit data powers are always larger than zero when users request non-zero SEs. The monotonicity property ensures that we can scale up or down \eqref{eq:Ilk} by adjusting the data powers. Finally, the scalability property suggests a method to uniformly scale down the data power coefficient of user~$k$ in cell~$l$ 
at each iteration by utilizing a positive constant $\alpha$. We now construct a policy to update the data power of user~$k$ in cell~$l$ for the given initial values $p_{lk}(0), \forall l,k,$ as in Theorem~\ref{Theorem:Alg1}.
\begin{theorem}\label{Theorem:Alg1}
By assuming that the feasible domain is non-empty and $0 \leq I_{lk} (\mathbf{p}) \leq P_{\max,lk}^2$ always holds for all $\mathbf{p}$ in the feasible domain. For the initial values of data powers $p_{lk} (0) = P_{\max,lk}, \forall l,k$, there exist data powers for which each interference function $I_{lk} (\mathbf{p})$ is non-increasing along iterations and converges to a fixed point. Particularly, the data power of user~$k$ in cell~$l$, denoted by $p_{lk}(n)$, can be updated at iteration~$n$ as
\begin{equation} \label{eq:plkn}
p_{lk}(n) = I_{lk} (\mathbf{p}(n-1)), \forall l,k.
\end{equation}
\end{theorem}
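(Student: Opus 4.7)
The plan is to recognize the update rule~\eqref{eq:plkn} as a Yates-type fixed-point iteration on a standard interference function and combine monotonicity with boundedness to conclude convergence to the unique fixed point, which coincides with the optimal allocation of~\eqref{Prob:TotalPowerOptv3}.

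\textbf{Step 1: verify that each $I_{lk}$ fits Definition~\ref{Def:TypeI}.} Inspecting~\eqref{eq:Ilk} together with the explicit forms~\eqref{eq:NIlk}--\eqref{eq:CIlk}, one sees that $\mathsf{NI}_{lk}(\mathbf{p})$ and $\mathsf{CI}_{lk}(\mathbf{p})$ are linear in $\mathbf{p}$ with non-negative coefficients, while $\mathsf{NO}_{lk}$ is a strictly positive constant (because $\sigma^{2}>0$ and $\nu_{lk}>0$). Positivity and entry-wise monotonicity of $I_{lk}(\mathbf{p})$ are therefore immediate. For scalability, linearity gives $\mathsf{NI}_{lk}(\alpha\mathbf{p})=\alpha\mathsf{NI}_{lk}(\mathbf{p})$ and $\mathsf{CI}_{lk}(\alpha\mathbf{p})=\alpha\mathsf{CI}_{lk}(\mathbf{p})$ for $\alpha>1$, whereas the noise term satisfies $\nu_{lk}\mathsf{NO}_{lk}<\alpha\nu_{lk}\mathsf{NO}_{lk}$. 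Summing these inequalities and dividing by the positive denominator of~\eqref{eq:Ilk} yields $I_{lk}(\alpha\mathbf{p})<\alpha I_{lk}(\mathbf{p})$.

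\textbf{Step 2: show the iterates are monotone and trapped in the power box.} Starting from $\mathbf{p}(0)=\mathbf{P}_{\max}$, the hypothesis $I_{lk}(\mathbf{p})\leq P_{\max,lk}$ (interpreted from the stated bound) gives $\mathbf{p}(1)=I(\mathbf{p}(0))\preceq\mathbf{p}(0)$. Inductively, if $\mathbf{p}(n)\preceq\mathbf{p}(n-1)$, the monotonicity axiom yields
\begin{equation*}
\mathbf{p}(n+1)=I(\mathbf{p}(n))\preceq I(\mathbf{p}(n-1))=\mathbf{p}(n),
\end{equation*}
so $\{\mathbf{p}(n)\}$ is entry-wise monotone and bounded below by $\mathbf{0}$ by positivity of $I_{lk}$. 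In particular each $p_{lk}(n)\in[0,P_{\max,lk}]$, so the iteration never leaves the feasible box.

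\textbf{Step 3: convergence to the unique fixed point.} By the monotone bounded convergence theorem in $\mathbb{R}^{KL}$, the sequence $\{\mathbf{p}(n)\}$ admits a limit $\bar{\mathbf{p}}$. Continuity of $I_{lk}$ in $\mathbf{p}$ (it is a rational function with strictly positive denominator) then forces $\bar{\mathbf{p}}=I(\bar{\mathbf{p}})$, i.e.\ $\bar{\mathbf{p}}$ is a fixed point. Invoking the classical uniqueness of fixed points for standard interference functions under non-empty feasibility, $\bar{\mathbf{p}}$ is the unique power vector that satisfies every SINR constraint with equality. Because~\eqref{Prob:TotalPowerOptv3} is a linear program whose objective is strictly increasing in every $p_{lk}$ and whose binding constraints at the optimum must be the SINR constraints, this fixed point is precisely the global optimum.

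\textbf{Main obstacle.} The delicate point is the scalability axiom: one must carefully separate the contributions that are genuinely linear in $\mathbf{p}$ (the non-coherent and coherent interference terms, both of which multiply transmit powers of interfering users) from the noise floor, which is independent of $\mathbf{p}$. The strict inequality in the scalability property, and consequently uniqueness of the fixed point, rests entirely on the noise term being strictly positive; if $\sigma^{2}=0$ one only obtains a homogeneous map with a ray of fixed points. A secondary, but more bookkeeping-style, difficulty is to rule out an initial overshoot outside $[0,P_{\max,lk}]$, which is handled by the standing assumption $I_{lk}(\mathbf{p})\leq P_{\max,lk}$ in the hypothesis.
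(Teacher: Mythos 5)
Your proposal is correct and follows essentially the same route as the paper: verify positivity, monotonicity, and scalability of each $I_{lk}$ so that it is a standard interference function, then run the Yates-type iteration from $p_{lk}(0)=P_{\max,lk}$ and use monotone decrease plus boundedness to reach the fixed point. The only difference is cosmetic — you unpack the monotone-convergence and uniqueness steps that the paper delegates to Yates' theorem and a cited induction lemma, and you sensibly read the hypothesis $I_{lk}(\mathbf{p})\leq P_{\max,lk}^2$ as the bound $I_{lk}(\mathbf{p})\leq P_{\max,lk}$ needed to keep the iterates inside the power box.
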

\begin{proof}
The proof is to testify every function $I_{lk} (\mathbf{p})$ defined in \eqref{eq:Ilk} being standard interference, and hence the updated power policy in \eqref{eq:plkn} ensures that this iterative approach will converge to a fixed point. The detailed proof is available in Appendix~\ref{Appendix:Alg1}. 
\end{proof}
Every user in the network has its own standard interference function satisfying the three fundamental properties in Definition~\ref{Def:TypeI} and utilizing it to update the data power as in \eqref{eq:plkn}. The analysis in Theorem~\ref{Theorem:Alg1} is based on the assumption that problem~\eqref{Prob:TotalPowerOptv3} has the global optimum for which all users are served with their required SEs. The power constraints in \eqref{Prob:TotalPowerOptv3} ($p_{lk} \leq P_{\max, lk}, \forall l,k$) are tackled by the fact if $I_{lk}(n-1) > P_{\max,lk}$, then the congestion issue appears and leads to an obvious selection $p_{lk} (n) =  P_{\max,lk}$. We therefore define the constrained standard interference function used at iteration~$n-1$ as
\begin{equation} \label{eq:Ihatlk}
\hat{I}_{lk}(\mathbf{p}(n-1)) = \min \left( I_{lk}(\mathbf{p}(n-1)), P_{\max,lk}   \right).
\end{equation}
For a cellular Massive MIMO system with the power budget constraints and the initial data power vector $\mathbf{p} (0)$ with the entries $p_{lk}(0) = P_{\max,lk}, \forall l,k,$ iteration~$n$ updates the data power of user~$k$ in cell~$l$ as
\begin{equation} \label{eq:plkv1}
p_{lk}(n) = \hat{I}_{lk}(\mathbf{p}(n-1)).
\end{equation}
Combining \eqref{eq:Ihatlk} and \eqref{eq:plkv1}, we observe that if $\hat{I}_{lk}(\mathbf{p}(n-1)) = P_{\max,lk}$, the update $p_{lk}(n) = P_{\max,lk}$ maintains the non-increasing objective function of problem~\eqref{Prob:TotalPowerOptv3}. Otherwise, it holds that $\hat{I}_{lk}(\mathbf{p} (n-1)) = I_{lk}(\mathbf{p} (n-1))$, and hence user~$k$ in cell~$l$ consumes less power than the maximum. This procedure will be applied to all the $KL$ users, which results in an alternating approach is summarized in Algorithm~\ref{Algorithm1}. Since the convergence of the update $p_{lk}(n) = P_{\max,lk}$ is trivial, the proposed algorithm converges to a fixed point follows a similar methodology as \cite[Theorem 7]{Yates1995a}. By assuming that the channel statistic information is computed in advance and available in the network, we can compute the total number of operations that dominate the computational complexity of this algorithm as
$ \mathcal{O} \left(N_mL^2K^2 + 3 N_m \left| \mathcal{P}_{lk} \right|LK \right),$ where $N_m$ is the number of iterations needed to reach the fixed point in polynomial time. Notice that, in Algorithm~\ref{Algorithm1}, when users cannot be served by the
required SEs, one still lets them utilize the maximum power. This policy aims at maximizing the SE of a particular user, however producing more mutual interference to the other users.

\begin{algorithm}[t]
	\caption{Data power allocation to problem~\eqref{Prob:TotalPowerOptv3} by spending maximum transmit power on unsatisfied users} \label{Algorithm1}
	\textbf{Input}:  Define maximum powers $P_{\max,lk}, \forall l,k$; Select initial values $p_{lk}(0) = P_{\max,lk}, \forall l,k$; Compute the total power consumption $P_{\mathrm{tot}}(0) = \sum_{l=1}^L \sum_{k=1}^K p_{lk}(0)$; Set initial value $n=1$ and tolerance $\epsilon$.
	\begin{itemize}
		\item[1.] User~$k$ in cell~$l$ computes the standard interference function $	{I}_{lk} \left(\mathbf{p} (n-1) \right)$ using \eqref{eq:Ilk}.
		\item[2.] If ${I}_{lk} \left(\mathbf{p} (n-1) \right) > P_{\max,lk}$, update $p_{lk}(n) = P_{\max,lk}$. Otherwise, update $ p_{lk}(n) = 	{I}_{lk} \left(\mathbf{p} (n-1) \right) $.
		\item[3.] Repeat Steps $1,2$ with other users, then compute the ratio \fontsize{9}{9}{$\gamma (n) =$ $| P_{\mathrm{tot}}(n) - P_{\mathrm{tot}}(n-1) | /  P_{\mathrm{tot}}(n-1)$}.
		\item[4.] If $\gamma_l (n) \leq \epsilon$ $\rightarrow$ Set $p_{lk}^{\ast} = p_{lk}(n),\forall l,k,$ and Stop. Otherwise, set $n= n+1$ and go to Step $1$.
	\end{itemize}
	\textbf{Output}: A fixed point $p_{lk}^{\ast}$, $\forall l,k$. \vspace*{-0.0cm}
\end{algorithm}
\subsection{Softly Removing Unsatisfied Users}
Instead of allowing potential unsatisfied users to spend full data power,  one can reduce their power with the goal to degrade mutual interference to the others. This policy might ameliorate the number of satisfied users in the entire network. The idea is in detail that: At first, every user improves the transmission quality by spending more power to each data symbol. This target can be achieved by, for example, simply constructing the standard inference functions as in the previous subsection. If at the limited power budget, the required SE cannot be achieved, unsatisfied users will reduce data power. We then mathematically suggest an update of the data powers along iterations as in Theorem~\ref{theorem:Standardfunction}.
\begin{theorem} \label{theorem:Standardfunction}
From the initial values $p_{lk}(0) = P_{\max,lk}, \forall l,k,$ if the data power of user~$k$ in cell~$l$ is updated at iteration~$n$ as
\begin{multline} \label{eq:UpdatedPowerv1}
p_{lk}(n) =  f_{lk} \left( \mathbf{p}(n-1) \right)  \\
= \begin{cases}
I_{lk} \left( \mathbf{p} (n-1) \right),& \mbox{if } I_{lk} \left( \mathbf{p} (n-1) \right) \leq P_{\max,lk}, \\
\frac{P_{\max,lk}^2}{I_{lk} \left(\mathbf{p} (n-1) \right)}, & \mbox{if } I_{lk} \left( \mathbf{p} (n-1) \right) > P_{\max,lk}, 
\end{cases}
\end{multline}
then the iterative approach converges to a fixed point.
\end{theorem}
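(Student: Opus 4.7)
The plan is to adapt the Yates-style fixed point framework used in Theorem~\ref{Theorem:Alg1}, while accommodating the two-branch update in \eqref{eq:UpdatedPowerv1}. The key observation is that the two branches can be unified as
\begin{equation*}
\tilde{I}_{lk}(\mathbf{p}) = \min\Big\{ I_{lk}(\mathbf{p}),\ P_{\max,lk}^2/I_{lk}(\mathbf{p}) \Big\},
\end{equation*}
because when $I_{lk}(\mathbf{p}) \leq P_{\max,lk}$ the first term is smaller, and when $I_{lk}(\mathbf{p}) > P_{\max,lk}$ the second term is smaller. Thus the update rule \eqref{eq:UpdatedPowerv1} is equivalent to $p_{lk}(n) = \tilde{I}_{lk}(\mathbf{p}(n-1))$. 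I would first verify that $\tilde{I}_{lk}(\mathbf{p}) \leq P_{\max,lk}$ for every admissible $\mathbf{p}$, so that the compact box $\mathcal{D} = \prod_{l,k}[0,P_{\max,lk}]$ is invariant under the vector map $\tilde{\mathbf{I}}$. The standing assumption $I_{lk}(\mathbf{p}) \leq P_{\max,lk}^2$ guarantees that on the second branch $P_{\max,lk}^2/I_{lk}(\mathbf{p}) \geq 1 > 0$, which keeps the iterates bounded away from zero.

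Next, I would start the iteration from $p_{lk}(0) = P_{\max,lk}$ and observe that $\mathbf{p}(1) \preceq \mathbf{p}(0)$ componentwise, since $\tilde{I}_{lk}(\mathbf{p}(0)) \leq P_{\max,lk}$. At each iteration $n$, I would partition the users into the \emph{first-branch set} $\mathcal{S}_n^{(1)} = \{(l,k): I_{lk}(\mathbf{p}(n)) \leq P_{\max,lk}\}$ and its complement $\mathcal{S}_n^{(2)}$. On $\mathcal{S}_n^{(1)}$ the update reduces to the classical recursion $p_{lk}(n+1) = I_{lk}(\mathbf{p}(n))$, which by Theorem~\ref{Theorem:Alg1} is monotone since $I_{lk}$ is a standard interference function (positivity, monotonicity, scalability). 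On $\mathcal{S}_n^{(2)}$ the updated power moves in the opposite direction to $I_{lk}(\mathbf{p}(n))$, because it is a decreasing function of the latter.

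The main technical step is then to combine these two monotone behaviors into convergence of the joint sequence. I would argue that the branch membership of each user stabilizes after finitely many iterations: a user that leaves one branch for the other forces a strict change in the sign of $I_{lk}(\mathbf{p}(n)) - P_{\max,lk}$, and the monotone evolution of $I_{lk}$ on each side rules out unbounded oscillation. Once the partition freezes, say after iteration $n_0$, the update acts as a single consistent rule per user, so on the frozen first-branch users the iteration reduces to the classical Yates recursion already known to converge monotonically, while on the frozen second-branch users the update becomes a decreasing function of a convergent quantity and hence also converges. Passing to the limit and using continuity of $\tilde{\mathbf{I}}$ on the compact invariant set $\mathcal{D}$ identifies the limit as a fixed point of \eqref{eq:UpdatedPowerv1}.

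The main obstacle I expect is precisely the stabilization of the branch partition: because $\tilde{\mathbf{I}}$ is not monotone in the Yates sense (the second branch is decreasing in $I_{lk}$), the classical bounded-monotone argument does not apply directly, and one has to control the coupled influence of first-branch users on second-branch users (who then re-enter first-branch, etc.). I would handle this either by a careful contraction-like estimate on $|I_{lk}(\mathbf{p}(n)) - P_{\max,lk}|$ for borderline users, or by invoking continuity on $\mathcal{D}$ together with Brouwer to at least secure existence of a fixed point and then showing every convergent subsequence of $\{\mathbf{p}(n)\}$ must share the same limit determined by the stabilized partition.
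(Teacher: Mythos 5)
Your reduction of \eqref{eq:UpdatedPowerv1} to $p_{lk}(n)=\min\bigl\{I_{lk}(\mathbf{p}(n-1)),\,P_{\max,lk}^2/I_{lk}(\mathbf{p}(n-1))\bigr\}$ is correct, as is the observation that this keeps the iterates in the box $\prod_{l,k}[0,P_{\max,lk}]$; but the argument then stalls exactly where you say it does. The claim that the branch membership of each user ``stabilizes after finitely many iterations'' is not proved, and it is the entire content of the theorem: because second-branch users \emph{decrease} their power when interference rises, the coupled system can in principle oscillate (a second-branch user backing off lowers the interference seen by others, which can push them back into the first branch, raising interference again, and so on), and ``monotone evolution of $I_{lk}$ on each side'' is not available since $I_{lk}(\mathbf{p}(n))$ depends on all users' powers, which are not jointly monotone in $n$. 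The fallbacks you mention (a contraction-like estimate for borderline users, or Brouwer plus a common-subsequential-limit argument) are only sketched, and Brouwer by itself would give existence of a fixed point, not convergence of the iteration to it.

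The paper closes this gap with a different tool: the \emph{two-sided scalable function} framework. One checks that $I_{lk}$ is two-sided scalable, i.e. $\frac{1}{\alpha}I_{lk}(\mathbf{p})<I_{lk}(\hat{\mathbf{p}})<\alpha I_{lk}(\mathbf{p})$ whenever $\frac{1}{\alpha}\mathbf{p}\preceq\hat{\mathbf{p}}\preceq\alpha\mathbf{p}$ with $\alpha>1$ (this follows from the positivity, monotonicity, and scalability already established in the proof of Theorem~\ref{Theorem:Alg1}), and that $P_{\max,lk}^2/I_{lk}$ inherits the same sandwich by inverting the inequalities and multiplying through by $P_{\max,lk}^2$. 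The piecewise update in \eqref{eq:UpdatedPowerv1} therefore satisfies two-sided scalability, and the known convergence theorem for two-sided scalable iterations yields convergence to a fixed point without ever requiring the branch partition to freeze. To salvage your route you would essentially have to reprove that theorem; the cleaner path is to verify two-sided scalability of both branches directly, as the paper does.
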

\begin{proof}
The proof is first to confirm that the updated power policy in \eqref{eq:UpdatedPowerv1} follows a so-called two-sided  function and the convergence is then established. The detailed proof is available in Appendix~\ref{Appendix:Standardfunction}.
\end{proof} 
This theorem provides a procedure to minimize the total transmit power in the network and coping with the congestion issue based on the standard interference function defined for each user as in \eqref{eq:Ilk}. If $I_{lk}(\mathbf{p}(n-1))$ is less than the maximum power $P_{\max, lk}$ then the data power of user~$k$ in cell~$l$ is updated based on \eqref{eq:plkn}, same as what has done in Algorithm~\ref{Algorithm1}. The main distinction is to prevent any unsatisfied user from transmitting full power whenever the congestion issue happens, i.e. $I_{lk} (\mathbf{p}(n-1)) > P_{\max,lk}$. In particular, the data power of a unsatisfied user scales down with the total mutual interference and noise level, which contains in $I_{lk}(\mathbf{p}(n-1))$. By doing this, the mutual interference from this unsatisfied user to the others should be reduced, and hence there is chance for the remaining users to get their required SEs. The proposed optimization approach is summarized in Algorithm~\ref{Algorithm2}. The per iteration complexity is $\mathcal{O}\left(L^2 K^2 + 3 |\mathcal{P}|LK \right)$, thus the computational complexity of Algorithm~\ref{Algorithm2} is in the order of $\mathcal{O}\left(N_s L^2 K^2 + 3 N_s |\mathcal{P}|LK \right)$, where $N_s$ is the number of iterations needed for this algorithm converges. Furthermore, Theorem~\ref{theorem:Standardfunction} analytically proves the convergence to a fixed point, whose property is stated in Remark~\ref{Remarlk:Property}.
\begin{remark}\label{Remarlk:Property}
The proposed algorithms enable to work in both feasible and infeasible domain such that a fixed point to problem~\eqref{Prob:TotalPowerOptv3} can be obtained. For realizations of user locations that result in feasible domains, the fixed point obtained by those algorithms is unique, which is the global optimum. The main difference between the two algorithms is at the policy to assign data powers whenever the congestion issue appears. While Algorithm~\ref{Algorithm1} allocates the maximum data power to users when their SINR constraints are not satisfied, Algorithm~\ref{Algorithm2} reduces the data power. As a consequence, for an infeasible domain to problem~\eqref{Prob:TotalPowerOptv3}, the fixed point obtained by each algorithm may be different from each other.  

We notice that it is straightforward to extend the proposed algorithms to the total downlink energy consumption optimization problem with the per-user power constraints. The extension is not trivial if one considers the per-BS total limited power budgets and a primal-dual decomposition approach might be utilized to allocate the downlink power coefficients based on the standard interference functions.
\end{remark}
\begin{algorithm}[t]
	\caption{Data power allocation to problem~\eqref{Prob:TotalPowerOptv3} by softly removing unsatisfied users} \label{Algorithm2}
	\textbf{Input}:  Define maximum powers $P_{\max,lk}, \forall l,k$; Select initial values $p_{lk}(0) = P_{\max,lk}, \forall l,k$; Compute the total power consumption $P_{\mathrm{tot}}(0) = \sum_{l=1}^L \sum_{k=1}^K p_{lk}(0)$; Set initial value $n=1$ and tolerance $\epsilon$.
	\begin{itemize}
		\item[1.] User~$k$ in cell~$l$ computes the standard interference function $	{I}_{lk} \left(\mathbf{p} (n-1) \right)$ using \eqref{eq:Ilk}.
		\item[2.] If ${I}_{lk} \left(\mathbf{p} (n) \right) < P_{\max,lk}$, update $ p_{lk}(n) = {I}_{lk} \left(\mathbf{p} (n-1) \right) $. Otherwise, update $p_{lk}(n) =  P_{\max,lk}^2 / {I}_{lk} \left(\mathbf{p} (n-1) \right)$.
		\item[3.] Repeat Steps $1,2$ with other users, then compute the ratio \fontsize{9}{9}{$\gamma (n) =$ $| P_{\mathrm{tot}}(n) - P_{\mathrm{tot}} (n-1) | /   P_{\mathrm{tot}}(n-1)$}.
		\item[4.] If $\gamma_l (n) \leq \epsilon$ $\rightarrow$ Set $p_{lk}^{\ast} = p_{lk}(n),\forall l,k,$ and Stop. Otherwise, set $n= n+1$ and go to Step $1$.
	\end{itemize}
	\textbf{Output}: A fixed point $ p_{lk}^{\ast}$, $\forall l,k$. 
\end{algorithm}
\begin{figure*}[t]
	\begin{minipage}{0.48\textwidth}
		\centering
		\includegraphics[trim=0.8cm 0cm 1.4cm 0.5cm, clip=true, width=3.0in]{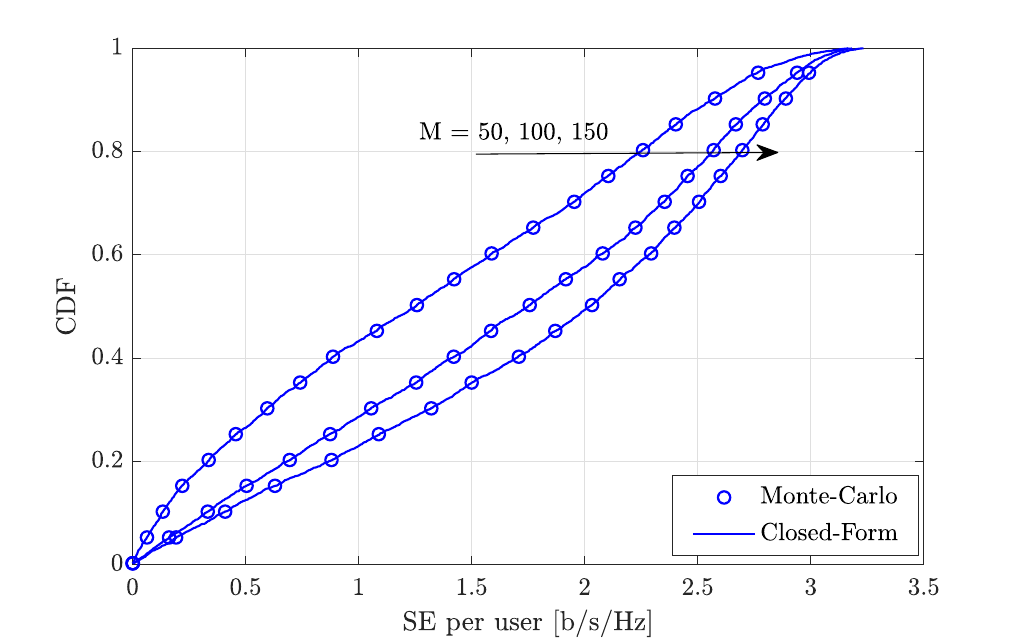} \vspace*{-0.25cm}
		\caption{The CDF of the uplink SE per user [b/s/Hz] with Monte-Carlo simulation and closed-form expression with $S_{lk}^{l'} = 21, \forall l,l',k$. }
		\label{FigMonteCarloClosedForm}
		\vspace*{-0.2cm}
	\end{minipage}
	\hfill
	\begin{minipage}{0.48\textwidth}
		\centering
		\includegraphics[trim=0.8cm 0cm 1.4cm 0.5cm, clip=true, width=3.0in]{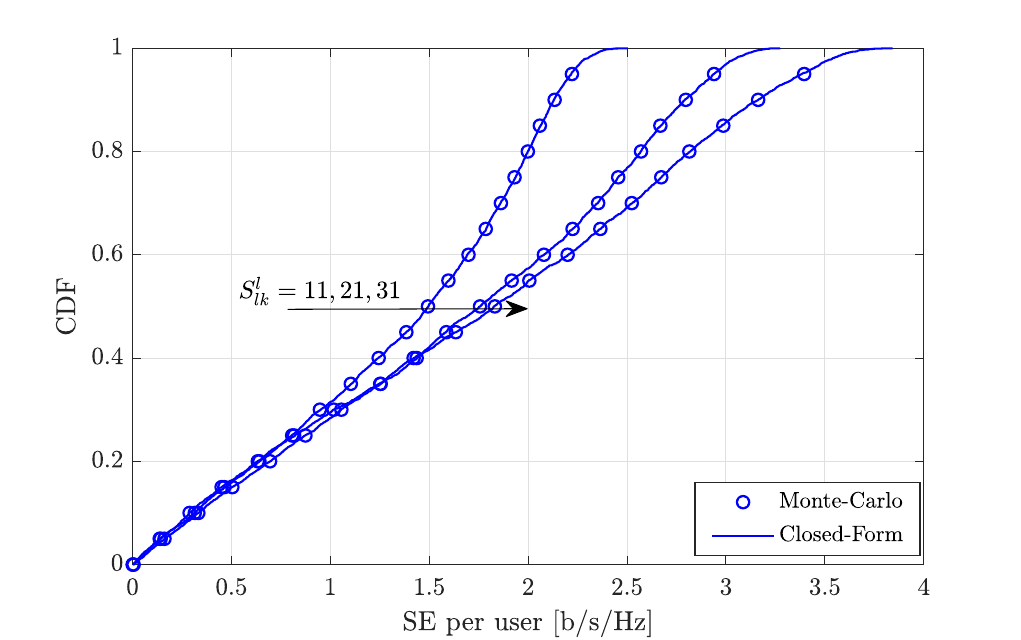} \vspace*{-0.25cm}
		\caption{The CDF of the uplink SE per user [b/s/Hz] with Monte-Carlo simulation and closed-form expression with $M=100$.}
		\label{FigMonteCarloClosedFormv1}
		\vspace*{-0.2cm}
	\end{minipage}
	\vspace*{-0.2cm}
\end{figure*}
\section{Numerical Results} \label{Sec:NumericalResults}
We consider a Massive MIMO system with $L=4$ square cells in a $1$~km$^2$ area, each serving $K=5$ users. All the users are uniformly distributed within its cell with the distance to the BS no less than $35$~m. Each coherence book has $\tau_c = 200$ symbols and there are $\tau_p = 5$ orthogonal pilot signals with the power $\hat{p}_{lk} = P_{\max,lk}= 200$~mW, $\forall l,k$. Without the loss of generality, the users with same index in all cells sharing a orthogonal pilot signal. The system bandwidth is $20$~MHz and the noise variance is $-96$~dBm with the noise figure $5$~dB. The large-scale fading coefficient [dB] of user~$k$ in cell~$l$ and BS~$l'$ is modeled based on the 3GPP LTE specifications \cite{LTE2010a} as
\begin{equation} \label{eq:LargeScale}
\beta_{lk}^{l'} = -128.1 - 37.6 \log_{10} \left( d_{lk}^{l'} / 1 \mbox{km}\right) + z_{lk}^{l'},
\end{equation}
where $d_{lk}^{l'} > 35$~m is the distance between user~$k$ in cell~$l$ and BS~$l'$; $z_{lk}^{l'}$ is the shadow fading coefficient, which follows a Gaussian distribution with zero mean and standard deviation $7$~dB. The covariance matrices are computed by using \cite[(13) and (16)]{van2016multi}. In the proposed algorithms (Algorithms~\ref{Algorithm1} and \ref{Algorithm2}), we set $\epsilon = 0.001$, except Fig.~\ref{FigConverg} which visualizes the convergence property. For feasible systems, the global optimum obtained by utilizing interior point methods from previous works like \cite{senel2019joint,van2020power} are included for comparison.\footnote{In \cite{senel2019joint,van2020power}, user locations and shadow fading realizations resulting in a feasible domain have been considered for conveniences to utilize the interior-point methods. If only one user is not satisfied with its SE requirement, it is sufficient to create an infeasible set. Consequently, the problem lacks a feasible solution.}

Figure~\ref{FigMonteCarloClosedForm} shows the cumulative distribution function (CDF) of SE per user [b/s/Hz] to verify the correctness of the closed-form expression of the uplink SE for each user obtained in Theorem~\ref{Theorem:ClosedFormMR}. There are $21$ scatterers per communication link and all users spend full power for the data transmission. Particularly, the closed-form expression result matches very well Monte-Carlo simulation result for all the considered number of BS antennas. This figure also illustrates the SE per user getting better when each BS is equipped with more antennas. Each user can be served by a data rate increasing from $1.3$ [b/s/Hz] to $1.8$ [b/s/Hz] on average if the number of BS antennas increases from $50$ to $150$, which is a $38.5\%$ data rate improvement. From this amount of antennas added, the median SE gets significantly better with a $60\%$ data rate improvement as a consequence of the SE per user increasing from $1.25$ [b/s/Hz] to $2$ [b/s/Hz].

Figure~\ref{FigMonteCarloClosedFormv1} plots the CDF of SE per user [b/s/Hz] with a different number of scatterers. Each BS is equipped with $100$ antennas. All the Monte-Carlo simulations producing the same SE as the closed-form expression verifies the correctness of Theorem~\ref{Theorem:ClosedFormMR} when the number of scatterers varies. Clearly, the SE per user gets better for rich scattering environments. On average, a notable gain of $1.25\times$  in SE is obtained if each channel has $21$ scatterers instead of $11$ scatterers. However, the SE has a small gai, e.g., with only $6.6\%$ if the propagation environment has $31$ scatterers. Therefore, Fig.~\ref{FigMonteCarloClosedFormv1} unveils a slow growth of the SE as a function of the scatterer number. At $95\%$-likely, the three considered scenarios provide the same SE with $0.16$ [b/s/Hz] without data power control. Consequently, it seems that poor scattering environments affect  the worst SE slightly.

\begin{figure*}[t]
	\begin{minipage}{0.48\textwidth}
		\centering
		\includegraphics[trim=0.8cm 0cm 1.4cm 0.5cm, clip=true, width=3.0in]{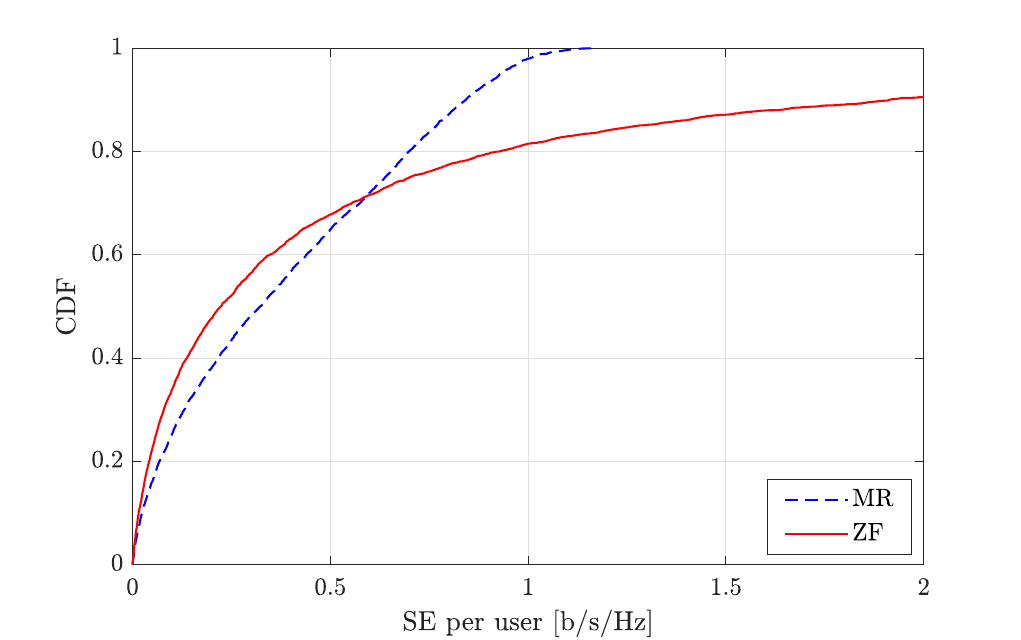} \vspace*{-0.25cm}
		\caption{The CDF of SE per user with the different linear combining techniques, $M= 100$, and $S_{lk}^{l'} = 3, \forall l,l',k$.}
		\label{FigDifferentCombining}
		\vspace*{-0.2cm}
	\end{minipage}
	\hfill
	\begin{minipage}{0.48\textwidth}
		\centering
		\includegraphics[trim=0.8cm 0cm 1.4cm 0.5cm, clip=true, width=3.0in]{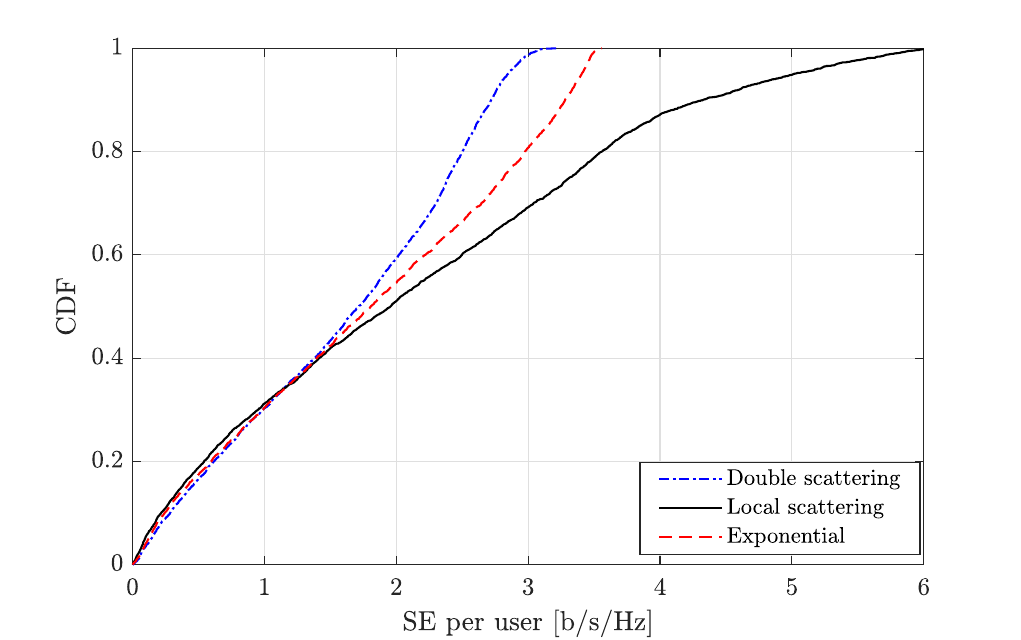} \vspace*{-0.25cm}
		\caption{The CDF of SE per user with the different spatially correlated models, $M=100$.}
		\label{FigDifferentCovarianceModels}
		\vspace*{-0.2cm}
	\end{minipage}
	\vspace*{-0.2cm}
\end{figure*}

Figure~\ref{FigDifferentCombining} shows the CDF of SE per user  [b/s/Hz] for a system with either MR or ZF combining technique with a small number of scatterers per each propagation channel. The transmit power per symbol is $50$~mW and the large-scale fading coefficients are computed similar to \eqref{eq:LargeScale} but with the penetration loss of $20$~dB. ZF generally provides better performance than MR since it cancels out mutual interference more effectively \cite{van2016multi}. On average, a system with MR combining is still the baseline that offers less than that of utilizing ZF combining. Nonetheless, Fig.~\ref{FigDifferentCombining} demonstrates the sensitivity of ZF when the propagation environment lacks scatterers in many user locations and shadow fading realizations which result in low-rank channels. Consequently,  MR outperforms ZF about $45.5\%$ at the median SE.

Figure~\ref{FigDifferentCovarianceModels} presents the CDF of SE per user by utilizing the different spatial correlation channel models. There are $21$ scatterers for each propagation link  with the double scattering channel model. The exponential correlation model is defined as in \cite{Chien2018a} with the correlation magnitude $0.9$, while the local scattering channel model is defined in \cite{massivemimobook} with $6$ scattering clusters, the angular standard deviation $5^\circ$, and the antenna spacing of the half wavelength. By assuming that the scattering clusters are in the  half-space in front of the BSs, the local scattering channel model offers the highest SE per user with up to $2.1$ [b/s/Hz] on average. The exponential correlation model provides the SE of about $1.8$ [b/s/Hz] per user. Meanwhile, the double scattering model yields to the lowest SE with only $1.6$ [b/s/Hz] due to taking both the local scattering property and rank deficiency into account.

\begin{figure*}[t]
	\begin{minipage}{0.48\textwidth}
		\centering
		\includegraphics[trim=0.6cm 0cm 1.4cm 0.5cm, clip=true, width=3.0in]{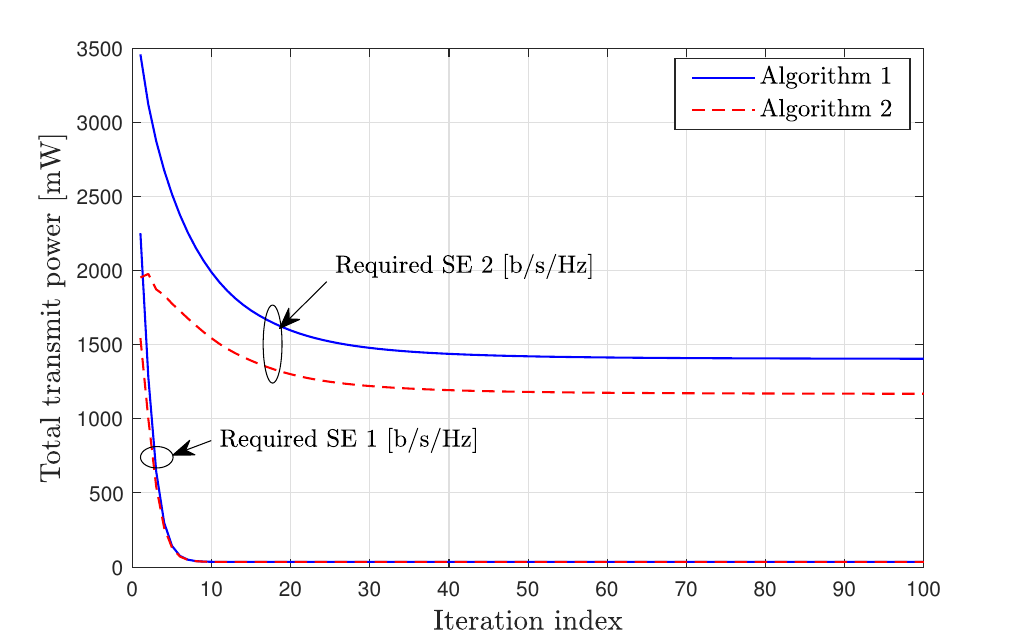} \vspace*{-0.25cm}
		\caption{The convergence of Algorithms~\ref{Algorithm1} and \ref{Algorithm2} with the different required SEs at the users, $M= 100$, and $S_{lk}^{l'} = 21, \forall l,l',k$. }
		\label{FigConverg}
		\vspace*{-0.2cm}
	\end{minipage}
	\hfill
	\begin{minipage}{0.48\textwidth}
		\centering
		\includegraphics[trim=0.8cm 0cm 1.4cm 0.5cm, clip=true, width=3.0in]{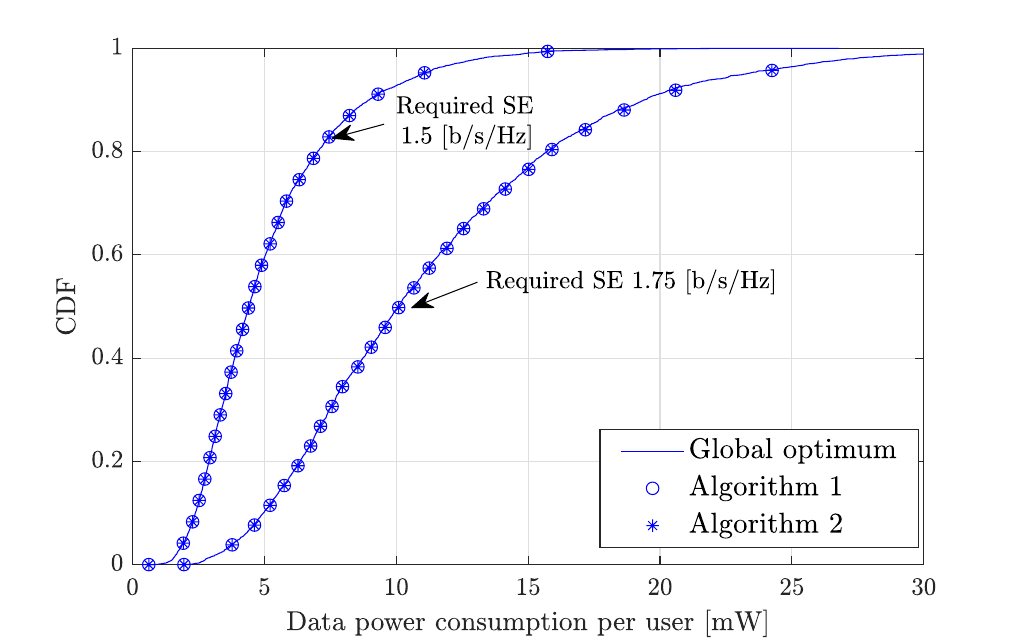} \vspace*{-0.25cm}
		\caption{The CDF of the power consumption per user [mW] for feasible systems with the different required SEs at the users, $M=100$, and $S_{lk}^{l'} = 21, \forall l,k,l'$. }
		\label{FigPowerFeas}
		\vspace*{-0.2cm}
	\end{minipage}
	\vspace*{-0.2cm}
\end{figure*}
Figure~\ref{FigConverg} illustrates the convergence of Algorithms~\ref{Algorithm1} and \ref{Algorithm2} by utilizing two different required SEs. They converge fast to a fixed point after a few tens of iterations. If each user requests a SE $1$ [b/s/Hz], the proposed algorithms need less than $10$ iterations to reach convergence, which is the same fixed point. This fixed point is the global optimum since the optimization problem is always feasible for the user locations and shadow fading realizations have been generated. When the required SEs expand to $2$ [b/s/Hz], the proposed algorithms require around $40$ iterations to approach the optimum. The convergence rate is therefore slower when the SE requirements enlarge. This SE setting also manifests the benefits of Algorithm~\ref{Algorithm2}, which yields $20\%$ less the total transmit power than Algorithm~\ref{Algorithm1}. On the other hand, the fixed point obtained by each algorithm is different from each other. 

We show the CDF of the data power consumption [mW] consumed by each user in Fig.~\ref{FigPowerFeas} for feasible systems with the two different required SEs. Matched well with the claim in Remark~\ref{Remarlk:Property} for feasible systems, the proposed algorithms provide a unique fixed point that is the global optimum as what has obtained by the interior-point methods. Additionally, data power escalates when users require higher SEs. With the required SE $1.5$ [b/s/Hz], each user only spends $5.2$ mW for each data symbol on average. However, it drastically grows to $11.4$ mW (corresponding to $2.2 \times$ more power) with the required SE $1.75$ [b/s/Hz]. Both the considered SE settings illustrate significant reductions of transmit power compared to the scenario dedicating full power to the data symbols. Particularly, all the users consume $38.5 \times$ and $17.5 \times$ less power than the full power transmission with the two considered SEs, respectively.

\begin{figure*}[t]
	\begin{minipage}{0.48\textwidth}
		\centering
		\includegraphics[trim=0.8cm 0cm 1.4cm 0.5cm, clip=true, width=3.0in]{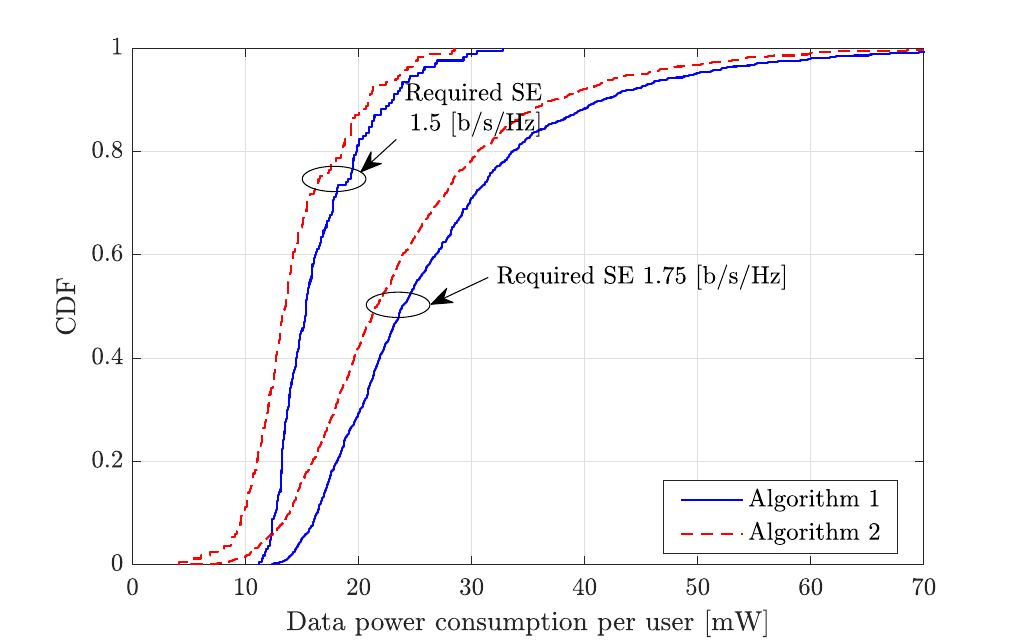} \vspace*{-0.25cm}
		\caption{The CDF of the power consumption per user [mW] for infeasible systems with the different required SEs at the users, $M=100$, and $S_{lk}^{l'} = 21, \forall l,k,l'$.}
		\label{FigPowerInfeas}
		\vspace*{-0.2cm}
	\end{minipage}
	\hfill
	\begin{minipage}{0.48\textwidth}
		\centering
		\includegraphics[trim=0.8cm 0cm 1.4cm 0.5cm, clip=true, width=3.0in]{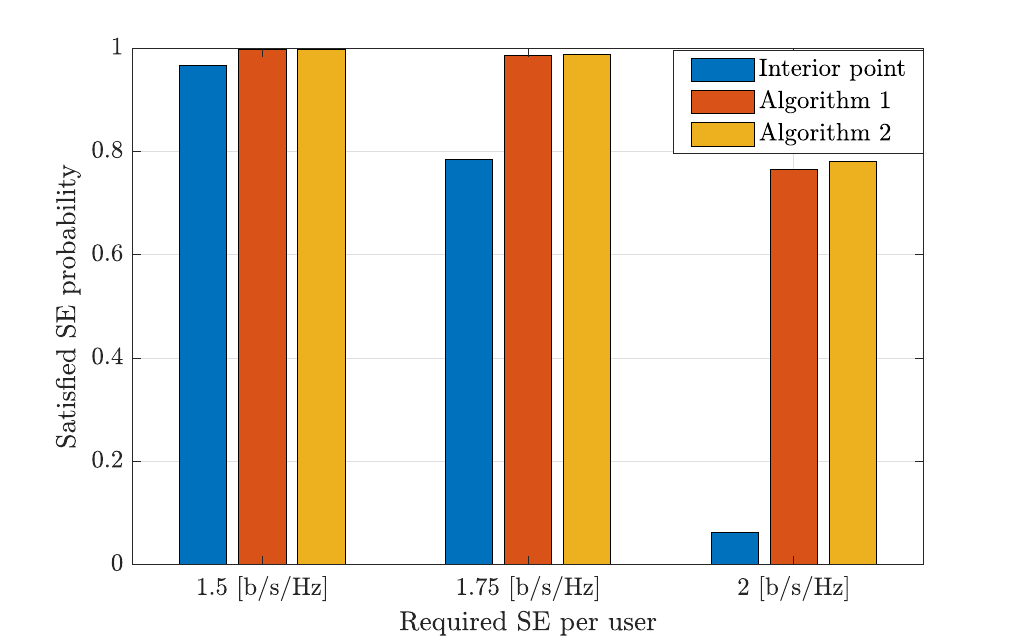} \vspace*{-.25cm}
		\caption{The satisfied SE probability versus the different required SE per user for a system with $M=100$ and $S_{lk}^{l'} = 21, \forall l,k,l'$.}
		\label{FigProb}
		\vspace*{-0.2cm}
	\end{minipage}
	\vspace*{-0.2cm}
\end{figure*}
Figure~\ref{FigPowerInfeas} displays the  CDF of the data power consumption [mW] per user for infeasible systems. It is the main interest of this paper when working with multiple access in Massive MIMO communications since there is no global optimum to obtain or compare against. All the users consume non-zero powers at the fixed points identified Algorithms~\ref{Algorithm1} and \ref{Algorithm2}. The trend that more data power is needed when the users require higher SEs has still remained. In more detail, the data power obtained by Algorithm~\ref{Algorithm1} grows $1.6 \times$  from $16.6$~mW to $27.0$~mW when the required SE increases from $1.5$ [b/s/Hz] to $1.75$ [b/s/Hz]. The data power increases $1.7\times$ from $14.5$~mW to $24.1$~mW if Algorithm~\ref{Algorithm2} is exploited. Moreover, the data power consumption per user obtained by Algorithm~\ref{Algorithm1} is $12.3\%$ and $15.1\%$ higher than by Algorithm~\ref{Algorithm2}.

Figure~\ref{FigProb} plots the satisfied SE probability defined as the fraction of random user locations and shadow fading realizations in which the users can be served by the required SEs. If each user requires an SE $1.5$ [b/s/Hz], all the benchmarks provide an overwhelming satisfied SE probability. For instance, the interior-point methods offer $96.7\%$ user locations and shadow fading realizations with the required SEs. Meanwhile, the proposed algorithms offer a satisfied SE probability $99.8\%$. However, the interior-point methods will perform worse with higher SE requirements since only one user is sufficient to create an empty feasible set as aforementioned in Section~\ref{SubSec:FeasInfeas}, especially only $6.3\%$ users satisfied the required SE $2$ [b/s/Hz]. In contrast, the proposed algorithms still offer a satisfied SE probability of more than $75\%$. Furthermore, Algorithm~\ref{Algorithm2} slightly performs better than Algorithm~\ref{Algorithm1} in those required SE settings.

\begin{figure*}[t]
	\begin{minipage}{0.48\textwidth}
		\centering
		\includegraphics[trim=0.8cm 0cm 1.4cm 0.5cm, clip=true, width=3.0in]{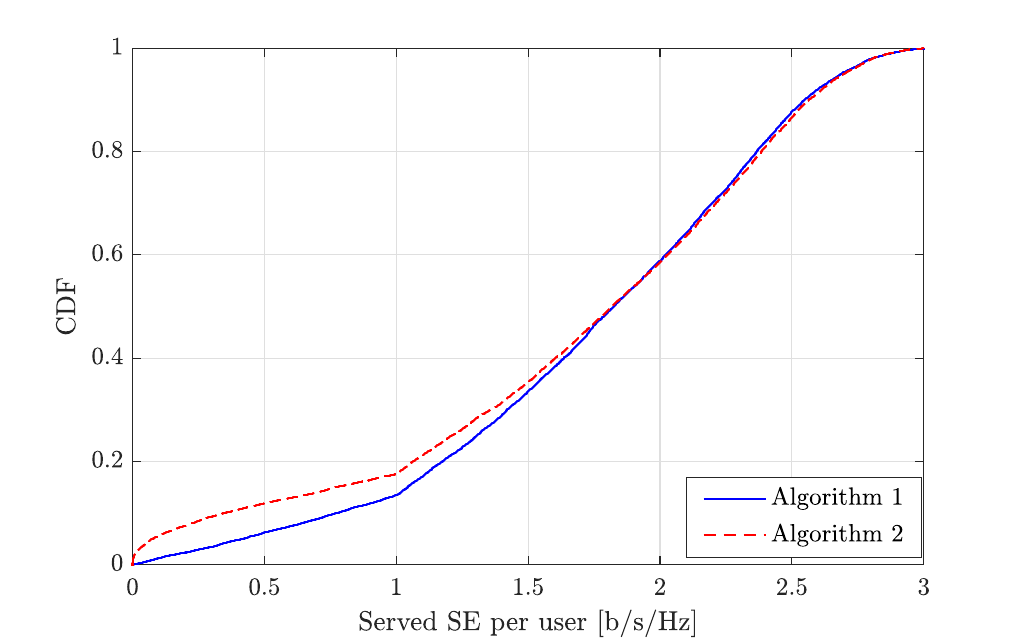} \vspace*{-.25cm}
		\caption{The CDF of served SE per user [b/s/Hz] with $M=100$, $S_{lk}^{l'} = 21, \forall l,k,l'$, and the required SEs uniformly varying in the range [1, 3] [b/s/Hz].}
		\label{FigCDFPowerVariedQoS}
		\vspace*{-0.2cm}
	\end{minipage}
	\hfill
	\begin{minipage}{0.48\textwidth}
		\centering
		\includegraphics[trim=0.8cm 0cm 1.4cm 0.5cm, clip=true, width=3.0in]{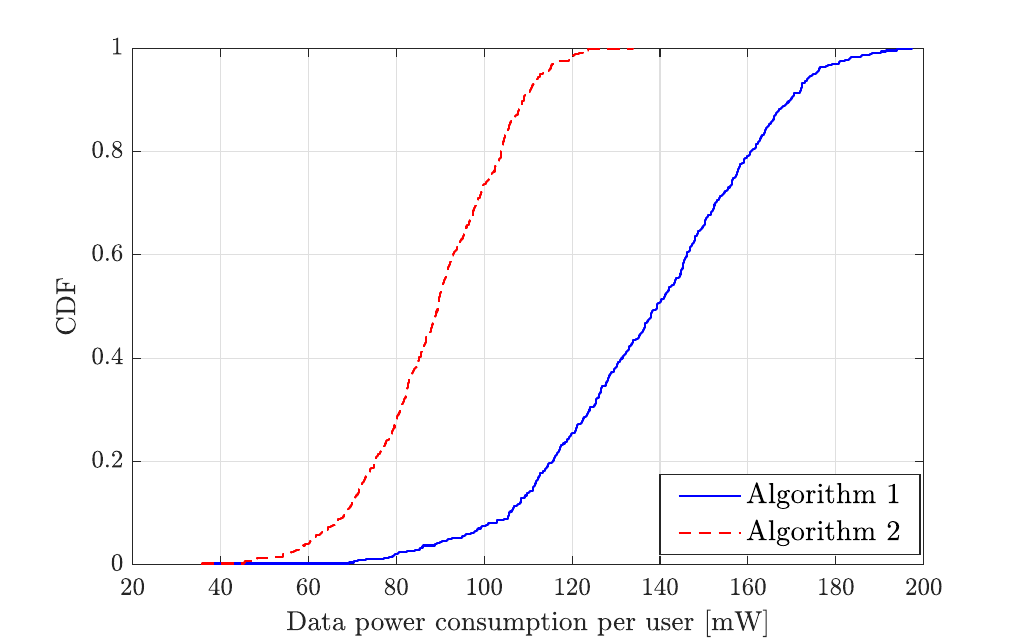} \vspace*{-0.25cm}
		\caption{The CDF of data power consumption [mW] with $M=100$, $S_{lk}^{l'} = 21, \forall l,k,l'$, and the required SEs uniformly varying in the range [1, 3] [b/s/Hz].}
		\label{FigCDFVariedQoS}
		\vspace*{-0.2cm}
	\end{minipage}
	\vspace*{-0.2cm}
\end{figure*}
Figure~\ref{FigCDFPowerVariedQoS} provides the served SE per user [b/s/Hz] when the users have different required SEs, which are uniformly distributed in the range $[1, 3]$ [b/s/Hz] over many user locations and shadowing fading realizations. The interior-point methods are not included since the optimization problem always has an empty feasible domain in this complicated scenario. Interestingly, Algorithm~\ref{Algorithm1} performs pretty better than Algorithm~\ref{Algorithm2}  since the former gives $86.5\%$ users satisfied their SEs, while the latter is only $82.5\%$. However,  Fig.~\ref{FigCDFVariedQoS} indicates that Algorithm~\ref{Algorithm2} produces a fixed point that has much lower power consumption than Algorithm~\ref{Algorithm1}. The saving power is about $54.7\%$ on average thanks to the data reduction policy in \eqref{eq:UpdatedPowerv1} whenever the congestion issue appears.
\begin{figure}[t]
	\centering
	\includegraphics[trim=0.8cm 0cm 1.4cm 0.5cm, clip=true, width=3.0in]{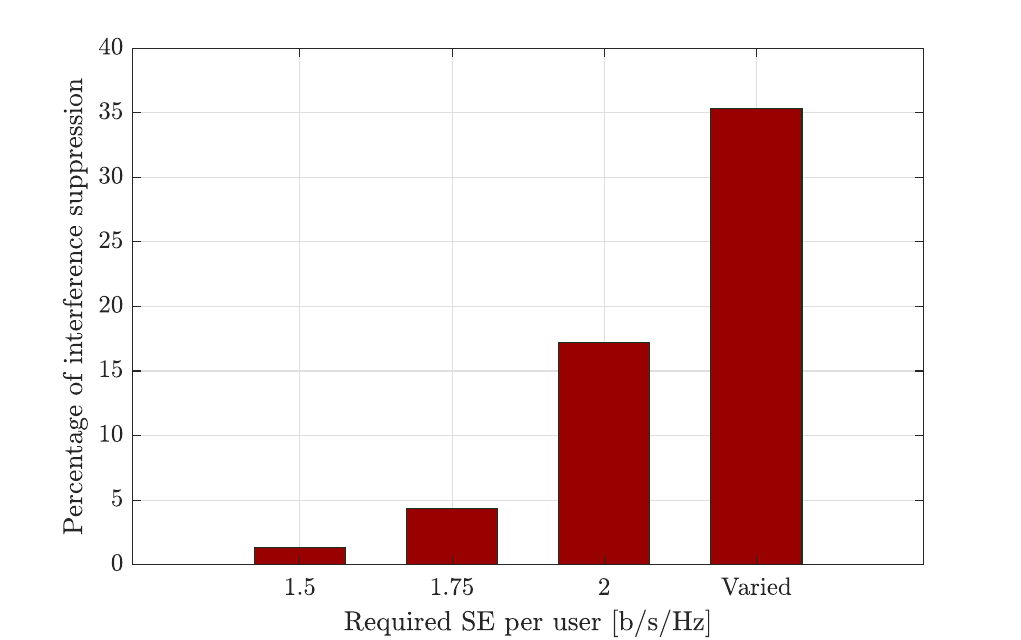} \vspace*{-0.25cm}
	\caption{The interference suppression obtained by Algorithm~\ref{Algorithm2} compared to Algorithm~\ref{Algorithm1} as a function of the required SE per user with $M=100$ and $S_{lk}^{l'}= 21, \forall l,k,l'$.}
	\label{FigPercentInterference}
	\vspace*{-0.2cm}
\end{figure}

Figure~\ref{FigPercentInterference} shows the percentage of interference suppression obtained by Algorithm~\ref{Algorithm2} in a comparison to Algorithm~\ref{Algorithm1} by utilizing the different required SEs per user. Softly removing unsatisfied users generates less mutual interference than  spending the maximum transmit power on those users, especially when the SE requirements are high. For instance, mutual interference from Algorithm~\ref{Algorithm2} is only $1.3\%$ less than that of Algorithm~\ref{Algorithm1} if the required SE per user is $1.5$ [b/s/Hz]. However, the mutual interference suppression gains up to $17.2\%$ with the SE requirement $2$ [b/s/Hz]. In particular, Algorithm~\ref{Algorithm2} suppresses mutual interference significantly when each user has its own SE requirement varied in the range from $1$ [b/s/Hz] to $3$ [b/s/Hz] with the mutual interference suppression of about $35.4\%$. We therefore conclude the effectiveness of the second algorithm compared with the first one.
\section{Conclusion} \label{Sec:Conclusion}
This paper has analyzed the system performance of Massive MIMO systems with an arbitrary number of BS antennas, users, and scatterers by utilizing the double scattering channel model, rather than the asymptotic regime as in previous works. The closed-form expression of the uplink SE per user was first computed, then the asymptotic performance was obtained. We further formulated and solved a total transmit power minimization problem with the required SE constraints and limited power budget. We proposed two algorithms to handle effectively the congestion issue that often happens since multiple users are simultaneously connecting to the network and sharing the same time and frequency resources. The solutions to those algorithms are quite similar to each other if the  required SEs can be almost satisfied with the given power budget. In contrast, Algorithm~\ref{Algorithm2} outperforms Algorithm~\ref{Algorithm1} in phenomena where the SE requirements are vastly different and many users cannot be served with the required SEs. 
\appendix
\subsection{Proof of Lemma~\ref{corollary:ChannelProperties}} \label{Appendix:ChannelProperties}
For a given matrix $\mathbf{B}$, we first compute the statistical information of the two channels $\mathbf{h}_{l'k'}^{l}$ and $ \mathbf{h}_{l''k''}^l$ when $(l',k') \neq (l'',k'')$ by averaging over the different realizations of small-fading coefficients as
\begin{multline} \label{eq:Property1}
	\mathbb{E} \left\{ \Big| \big(\mathbf{h}_{l'k'}^{l}\big)^{\rm H} \mathbf{B}  \mathbf{h}_{l''k''}^l \Big|^2 \right\} =  \\ 
	\mathrm{tr} \left( \mathbf{B} \mathbb{E} \Big\{ \mathbf{h}_{l'' k''}^l \big( \mathbf{h}_{l'' k''}^l\big)^{\rm H} \Big\} \mathbf{B}^{\rm H}  \mathbb{E} \Big\{ \mathbf{h}_{l' k'}^l \big( \mathbf{h}_{l' k'}^l\big)^{\rm H} \Big\} \right).
\end{multline}
The first expectation in the right-hand side of \eqref{eq:Property1} is computed by plugging the definition of the double-scattering channel model in \eqref{eq:Channel} as
\begin{equation} \label{eq:Expv1}
	\begin{split}
		&\mathbb{E} \Big\{ \mathbf{h}_{l'' k''}^l \big( \mathbf{h}_{l'' k''}^l\big)^{\rm H} \Big\} = \frac{\beta_{l''k''}^l }{S_{l'' k''}^l} \mathbb{E} \Big\{ \big(\mathbf{R}_{l''k''}^{l}\big)^{1/2} \mathbf{G}_{l''k''}^{l} \big(\widetilde{\mathbf{R}}_{l''k''}^{l} \big)^{1/2}   \times \\
		& \mathbb{E} \Big\{ \mathbf{g}_{l''k''}^{l} \big( \mathbf{g}_{l''k''}^{l}\big)^{\rm H} \Big\}  \big(\widetilde{\mathbf{R}}_{l''k''}^{l} \big)^{1/2} \big( \mathbf{G}_{l''k''}^{l} \big)^{\rm H} \big(\mathbf{R}_{l''k''}^{l}\big)^{1/2} \Big\}  \\
		&= \frac{\beta_{l''k''}^l }{S_{l'' k''}^l}     \big(\mathbf{R}_{l''k''}^{l}\big)^{1/2} \mathbb{E} \Big\{\mathbf{G}_{l''k''}^{l} \widetilde{\mathbf{R}}_{l''k''}^{l}  \big( \mathbf{G}_{l''k''}^{l} \big)^{\rm H} \Big\} \big(\mathbf{R}_{l''k''}^{l}\big)^{1/2} \\
		& = \beta_{l''k''}^l d_{l''k''}^l \mathbf{R}_{l''k''}^{l},
	\end{split}
\end{equation}
where the last equality of \eqref{eq:Expv1} is obtained by utilizing \cite[Lemma~$8$]{van2020massivechannels} to compute the covariance matrix of the circularly symmetric complex Gaussian matrix $\mathbf{G}_{l''k''}^l$ for a given deterministic matrix $\mathbf{R}_{l''k''}^{l}$. Following a similar manner, the second expectation in the right-hand side of \eqref{eq:Property1} is computed in closed form as
\begin{equation} \label{eq:Expv2}
	\mathbb{E} \Big\{ \mathbf{h}_{l' k'}^l \big( \mathbf{h}_{l' k'}^l\big)^{\rm H} \Big\} = \beta_{l'k'}^l d_{l'k'}^l \mathbf{R}_{l'k'}^{l}.
\end{equation}
Plugging \eqref{eq:Expv1} and \eqref{eq:Expv1} into \eqref{eq:Property1}, we obtain the result as shown in \eqref{eq:Cor1eq9}. For a given deterministic matrix $\mathbf{B}$, the statistical information of the channel $\mathbf{h}_{l'k'}^{l}$ is computed as
\fontsize{10}{10}{\begin{equation} \label{eq:Expv3}
		\begin{split}
			&\mathbb{E} \left\{ \Big| \big(\mathbf{h}_{l'k'}^l \big)^{\rm H} \mathbf{B}  \mathbf{h}_{l'k'}^l \Big|^2 \right\}  = \frac{ \big(\beta_{l'k'}^l \big)^2 }{ \big(S_{l' k'}^l\big)^2} \mathbb{E} \Big\{ \Big|  \big( \mathbf{g}_{l'k'}^{l}\big)^{\rm H}  \big(\widetilde{\mathbf{R}}_{l'k'}^{l} \big)^{1/2} \big( \mathbf{G}_{l'k'}^{l} \big)^{\rm H} \\
			& \times \big(\mathbf{R}_{l'k'}^{l}\big)^{1/2} \mathbf{B}  \big(\mathbf{R}_{l'k'}^{l}\big)^{1/2} \mathbf{G}_{l'k'}^{l} \big(\widetilde{\mathbf{R}}_{l'k'}^{l} \big)^{1/2}  \mathbf{g}_{l'k'}^{l}  \Big|^2 \Big\}\\
			& = \frac{\big(\beta_{l'k'}^l \big)^2}{ \big( S_{l' k'}^l \big)^2} \mathbb{E} \left\{ \Big\| \big(\widetilde{\mathbf{R}}_{l'k'}^{l} \big)^{1/2}  \mathbf{g}_{l'k'}^{l} \Big\|^4 \left| \frac{ \big( \mathbf{g}_{l'k'}^{l}\big)^{\rm H} \big(\widetilde{\mathbf{R}}_{l'k'}^{l} \big)^{1/2} \big( \mathbf{G}_{l'k'}^{l} \big)^{\rm H}}{ \Big\| \big(\widetilde{\mathbf{R}}_{l'k'}^{l} \big)^{1/2}  \mathbf{g}_{l'k'}^{l} \Big\|}  \right. \right. \\
			& \left. \left.  \times \big(\mathbf{R}_{l'k'}^{l}\big)^{1/2} \mathbf{B}  \big(\mathbf{R}_{l'k'}^{l}\big)^{1/2} \frac{\mathbf{G}_{l'k'}^{l} \big(\widetilde{\mathbf{R}}_{l'k'}^{l} \big)^{1/2}  \mathbf{g}_{l'k'}^{l}}{ \Big\| \big(\widetilde{\mathbf{R}}_{l'k'}^{l} \big)^{1/2}  \mathbf{g}_{l'k'}^{l} \Big\|}  \right|^2 \right\},
		\end{split}
\end{equation}}
where the last equality of \eqref{eq:Expv3} is obtained by utilizing the normalization term $\Big\| \big(\widetilde{\mathbf{R}}_{l'k'}^{l} \big)^{1/2}  \mathbf{g}_{l'k'}^{l} \Big\|$. Let us introduce the new optimization variable $\mathbf{z}_{l'k'}^l$, which is defined as
\begin{equation}
	\mathbf{z}_{l'k'}^l =  \frac{\mathbf{G}_{l'k'}^{l} \big(\widetilde{\mathbf{R}}_{l'k'}^{l} \big)^{1/2}  \mathbf{g}_{l'k'}^{l}}{ \Big\| \big(\widetilde{\mathbf{R}}_{l'k'}^{l} \big)^{1/2}  \mathbf{g}_{l'k'}^{l} \Big\|},
\end{equation}
then it is straightforward to prove that $\mathbf{z}_{l'k'}^l \sim \mathcal{CN}\left( \mathbf{0}, \mathbf{I}_M \right)$, and is independent of $\mathbf{g}_{l'k'}^l$. Thus, \eqref{eq:Expv3} is equivalent to the following expression 
\begin{equation} \label{eq:2ndProperty}
	\begin{split}
		&\mathbb{E} \left\{ \Big| \big(\mathbf{h}_{l'k'}^l \big)^{\rm H} \mathbf{B}  \mathbf{h}_{l'k'}^l \Big|^2 \right\} = \frac{\big(\beta_{l'k'}^l\big)^2 }{\big(S_{l' k'}^l\big)^2} \mathbb{E} \left\{ \Big\| \big(\widetilde{\mathbf{R}}_{l'k'}^{l} \big)^{1/2}  \mathbf{g}_{l'k'}^{l} \Big\|^4 \right\} \times \\
		& \quad \mathbb{E} \left\{ \Big| \big(\mathbf{z}_{l'k'}^l\big)^{\rm H} \big(\mathbf{R}_{l'k'}^{l}\big)^{1/2} \mathbf{B}  \big(\mathbf{R}_{l'k'}^{l}\big)^{1/2} \mathbf{z}_{l'k'}^l  \Big|^2 \right\}\\
		&= \frac{ \big(\beta_{l'k'}^l\big)^2 }{\big(S_{l' k'}^l\big)^2}\left(\Big| \mathrm{tr} \big( \widetilde{\mathbf{R}}_{l'k'}^l \big) \Big|^2 +    \mathrm{tr} \Big( \big( \widetilde{\mathbf{R}}_{l'k'}^l \big)^2 \Big) \right) \times \\
		& \quad \left( \Big|\mathrm{tr} \big( \mathbf{R}_{l'k'}^l \mathbf{B} \big) \Big|^2 + \mathrm{tr} \big( \mathbf{R}_{l'k'}^l \mathbf{B} \mathbf{R}_{l'k'}^l \mathbf{B}^H \big) \right),
	\end{split}
\end{equation}
where the last equality in \eqref{eq:2ndProperty} is obtained by utilizing \cite[Lemma~9]{van2020massivechannels} to compute the forth moment of zero-mean complex Gaussian variables, and then the result is obtained as in \eqref{eq:4moments} after doing some algebra.
\subsection{Proof of Lemma~\ref{lemma:ChannelEstPhaseAware}} \label{appendix:ChannelEstPhaseAware}
Following the similar approach as \cite[Lemma~3]{van2020massivechannels}, we can compute the correlation matrix of two channel vectors $\mathbf{h}_{lk}^{l'}$ and $\mathbf{h}_{l''k''}^{l'}$ by averaging over the different realizations of small-scale fading coefficients as
\begin{align} \label{eq:CovMatrix}
\mathbb{E} \Big\{ \mathbf{h}_{lk}^{l'} \big(\mathbf{h}_{l"k"}^{l'}\big)^{\rm H} \Big\} =  \begin{cases} \beta_{lk}^{l'} d_{lk}^{l'} \mathbf{R}_{lk}^{l'}, & \mbox{if }  (l,k) = (l'',k''), \\
\mathbf{0}, & \mbox{if } (l,k) \neq (l'',k'').
\end{cases}
\end{align}
The LMMSE estimate $\hat{\mathbf{h}}_{l'k'}^l$ is obtained by, first, computing the cross-covariance matrix between the two random variables $\mathbf{h}_{l'k'}^l$ and $\mathbf{y}_{l'k'}^{l,p}$ as
\begin{equation} \label{eq:CrossCov}
\mathbb{E} \Big\{ \mathbf{h}_{l'k'}^l \big(\mathbf{y}_{l'k'}^{l,p} \big)^{\rm H} \Big\} = \sqrt{\hat{p}_{l'k'}} \tau_p \beta_{l'k'}^l d_{l'k'}^l \mathbf{R}_{l'k'}^l.
\end{equation}
In fact, \eqref{eq:CrossCov} is obtained by utilizing the formulation of $\mathbf{y}_{l'k'}^{l,p}$ in \eqref{eq:ylklp} and the channel correlation property in \eqref{eq:CovMatrix}. The covariance matrix of the signal $\mathbf{y}_{l'k'}^{l,p}$ is computed as
\begin{equation} \label{eq:CovMa}
\mathbb{E} \Big\{ \mathbf{y}_{l'k'}^{l,p} \big(\mathbf{y}_{l'k'}^{l,p} \big)^{\rm H} \Big\}  = \big(\pmb{\Psi}_{l'k'}^{l}\big)^{-1} \tau_p^{-1}.
\end{equation}
By utilizing \eqref{eq:CrossCov} and \eqref{eq:CovMa} into the Bayesian Gauss-Markov theorem \cite[Theorem 12.1]{Kay1993a}, i.e.,
\begin{equation}
\hat{\mathbf{h}}_{l'k'}^{l} =  \mathbb{E} \Big\{ \mathbf{h}_{l'k'}^l \big(\mathbf{y}_{l'k'}^{l,p} \big)^{\rm H} \Big\} \left( \mathbb{E} \Big\{ \mathbf{y}_{l'k'}^{l,p} \big(\mathbf{y}_{l'k'}^{l,p} \big)^{\rm H} \Big\} \right)^{-1} \mathbf{y}_{l'k'}^{l,p}, 
\end{equation}
and doing some algebra, we obtain the expression of the channel estimate $\hat{\mathbf{h}}_{l'k'}^{l}$ as shown in the lemma. 

\subsection{Proof of Theorem~\ref{Theorem:ClosedFormMR}} \label{Appendix:ClosedFormMR}
We compute the expectation in the numerator of \eqref{eq:GeneralSINR} with noting that $\mathbf{v}_{lk}= \hat{\mathbf{h}}_{lk}^l$ as
\begin{equation} \label{eq:Numerv1}
\mathbb{E} \left\{ \mathbf{v}_{lk}^{\rm H} \mathbf{h}_{lk}^l \right\}  = \mathbb{E} \left\{ \| \mathbf{v}_{lk} \|^2 \right\} =  \sqrt{z_{lk}^l} \mathrm{tr} \left(\mathbf{R}_{lk}^l \pmb{\Psi}_{lk}^l \mathbf{R}_{lk}^l \right),
\end{equation}
where the last equality in \eqref{eq:Numerv1} is obtained by using the covariance property in \eqref{eq:CovarianceEst}. The first part of the denominator of \eqref{eq:GeneralSINR} is decomposed into the coherent and non-coherent interference based on the pilot reuse pattern as
\begin{equation} \label{eq:FirstDen}
\begin{split}
&\sum_{l'=1}^L \sum_{k'=1}^{K} p_{l'k'} \mathbb{E} \Big\{ \big| \mathbf{v}_{lk}^{\rm H} \mathbf{h}_{l'k'}^l \big|^2 \Big\} = \sum_{(l',k') \notin \mathcal{P}_{lk}}  p_{l'k'} \mathbb{E} \Big\{ \big| \mathbf{v}_{lk}^{\rm H} \mathbf{h}_{l'k'}^l \big|^2 \Big\} \\
&+  \sum_{(l',k') \in \mathcal{P}_{lk}}  p_{l'k'} \mathbb{E} \Big\{ \big| \mathbf{v}_{lk}^{\rm H} \mathbf{h}_{l'k'}^l \big|^2 \Big\}.
\end{split}
\end{equation}
The first expectation in the right-hand side of \eqref{eq:FirstDen} is non-coherent interference and computed in closed form by the independence of two random variables $\mathbf{v}_{lk}$ and $\mathbf{h}_{l'k'}^l$ as
\begin{equation} \label{eq:FirstofFirst}
\begin{split}
 &\sum_{(l',k') \notin \mathcal{P}_{lk}}  p_{l'k'} \mathbb{E} \Big\{ \big| \mathbf{v}_{lk}^{\rm H} \mathbf{h}_{l'k'}^l \big|^2 \Big\} \\
 &= \sum_{(l',k') \notin \mathcal{P}_{lk}}  p_{l'k'} \mathrm{tr} \left(  \mathbb{E} \Big\{  \mathbf{h}_{l'k'}^l  \big(\mathbf{h}_{l'k'}^l \big)^{\rm H}  \Big\} \mathbb{E} \big\{  \mathbf{v}_{lk} \mathbf{v}_{lk}^{\rm H}  \big\}  \right)  \\
&= \sum_{(l',k') \notin \mathcal{P}_{lk}}  p_{l'k'} m_{l'k'}^l  \mathrm{tr} \big( \mathbf{R}_{lk}^l \pmb{\Psi}_{lk}^l \mathbf{R}_{lk}^l \mathbf{R}_{l'k'}^l  \big).
\end{split}
\end{equation}
The second expectation in the right-hand side of \eqref{eq:FirstDen} is  coherent interference and computed by utilizing the channel estimate in \eqref{eq:ChannelEst} to construct the combining vector as
\begin{equation} \label{eq:SecondExp}
\begin{split}
&\sum_{(l',k') \in \mathcal{P}_{lk}}  p_{l'k'} \mathbb{E} \Big\{ \big| \mathbf{v}_{lk}^{\rm H} \mathbf{h}_{l'k'}^l \big|^2 \Big\} \\
& =  \sum_{(l',k') \in \mathcal{P}_{lk}}  p_{l'k'} \mathbb{E} \left\{ \Big| \big(\mathbf{y}_{lk}^{l,p} \big)^{\rm H} \big(\mathbf{B}_{lk}^l\big)^{\rm H}  \mathbf{h}_{l'k'}^l \Big|^2 \right\} \\
&= \sum_{(l',k') \in \mathcal{P}_{lk}}  p_{l'k'} \mathbb{E} \Big\{ \Big|  \sum_{(l'',k'') \in \mathcal{P}_{lk}} \sqrt{\hat{p}_{l''k''}} \tau_p \big(\mathbf{h}_{l''k''}^l\big)^{\rm H} \big(\mathbf{B}_{lk}^l\big)^{\rm H} \mathbf{h}_{l'k'}^l  \\
& \qquad \qquad  \qquad  \qquad + \pmb{\phi}_{lk}^{\rm H} \big(\mathbf{N}_l^{p}\big)^{\rm H}   \big(\mathbf{B}_{lk}^l\big)^{\rm H} \mathbf{h}_{l'k'}^l \Big|^2 \Big\} \\
& =  \sum_{(l',k') \in \mathcal{P}_{lk}} \sum_{\substack{(l'',k'') \in \mathcal{P}_{lk} \setminus (l',k') }} p_{l'k'} \hat{p}_{l''k''} \tau_p^2  \\
& \quad \times \mathbb{E} \left\{ \left| \big(\mathbf{h}_{l''k''}^l\big)^{\rm H} \big(\mathbf{B}_{lk}^l\big)^{\rm H} \mathbf{h}_{l'k'}^l  \right|^2 \right\} \\
& \quad + \sum_{(l',k') \in \mathcal{P}_{lk}} p_{l'k'} \hat{p}_{l'k'} \tau_p^2 \mathbb{E} \left\{ \Big| \big(\mathbf{h}_{l'k'}^l\big)^{\rm H} \big( \mathbf{B}_{lk}^l \big)^{\rm H} \mathbf{h}_{l'k'}^l  \Big|^2 \right\}  \\
& \quad +  \sum_{(l',k') \in \mathcal{P}_{lk}} p_{l'k'} \mathbb{E} \left\{ \Big| \pmb{\phi}_{lk}^{\rm H} \big(\mathbf{N}_l^{p}\big)^{\rm H}  \big( \mathbf{B}_{lk}^l \big)^{\rm H} \mathbf{h}_{l'k'}^l  \Big|^2 \right\},
\end{split}
\end{equation}
where $\mathbf{B}_{lk}^l = \sqrt{\hat{p}_{lk}} \beta_{lk}^l d_{lk}^l \mathbf{R}_{lk}^l \pmb{\Psi}_{lk}^l$ and the last equality in \eqref{eq:SecondExp} is decomposed based on the correlation among the channels, and the uncorrelation between the channels and noise. In the last equation of \eqref{eq:SecondExp}, the first expectation is computed by using the independence of two random variables $\mathbf{h}_{l''k''}^l$ and $\mathbf{h}_{l'k'}^l$ as
\begin{equation} \label{eq:Firstv1}
\begin{split}
&\mathbb{E} \left\{ \Big| \big(\mathbf{h}_{l''k''}^l \big)^{\rm H} \big(\mathbf{B}_{lk}^l\big)^{\rm H} \mathbf{h}_{l'k'}^l  \Big|^2 \right\} \\
&=  \beta_{l''k''}^l d_{l''k''}^l \beta_{l'k'}^l d_{l'k'}^l \mathrm{tr} \Big( \big(\mathbf{B}_{lk}^l \big)^{\rm H} \mathbf{R}_{l'k'}^l \mathbf{B}_{lk}^l \mathbf{R}_{l''k''}^l \Big) \\
&= \beta_{l''k''}^l d_{l''k''}^l \frac{m_{l'k'}^l}{\tau_p} \mathrm{tr} \big( \pmb{\Psi}_{lk}^l \mathbf{R}_{lk}^l \mathbf{R}_{l'k'}^l \mathbf{R}_{lk}^l \pmb{\Psi}_{lk}^l \mathbf{R}_{l''k''}^l \big).
\end{split}
\end{equation} 
In order to obtain the result in \eqref{eq:Firstv1}, we have borrowed \eqref{eq:Cor1eq9} in Corollary~\ref{corollary:ChannelProperties}. The second expectation of \eqref{eq:SecondExp} is computed by exploiting \eqref{eq:4moments} as
\begin{equation}
\begin{split}
& \mathbb{E} \left\{ \Big| \big(\mathbf{h}_{l'k'}^l\big)^{\rm H} \big(\mathbf{B}_{lk}^l\big)^{\rm H} \mathbf{h}_{l'k'}^l  \Big|^2 \right\} = \big(\beta_{l'k'}^l \big)^2 \left( \big(d_{l'k'}^l\big)^2 + \frac{ \mathrm{tr} \Big(\big(\widetilde{\mathbf{R}}_{l'k'}^l\big)^2 \Big)}{\big(S_{l'k'}^l \big)^2} \right)  \\
&\times \left( \left| \mathrm{tr} \Big( \mathbf{R}_{l'k'}^l  \big(\mathbf{B}_{lk}^l \big)^{\rm H}  \right)\right|^2   + \mathrm{tr} \left( \mathbf{R}_{l'k'}^l \left( \mathbf{B}_{lk}^l \right)^{\rm H} \mathbf{R}_{l'k'}^l  \mathbf{B}_{lk}^l  \right) \Big) \\
&  = \big(\beta_{l'k'}^l \big)^2 \hat{p}_{lk}   \big(\beta_{lk}^l \big)^2 \big(d_{lk}^l \big)^2\left( \big(d_{l'k'}^l\big)^2 + \frac{ \mathrm{tr} \left( \big(\widetilde{\mathbf{R}}_{l'k'}^l \big)^2 \right)}{\big(S_{l'k'}^l\big)^2} \right) \times\\
&\left( \Big| \mathrm{tr} \big(  \mathbf{R}_{l'k'}^l \pmb{\Psi}_{lk}^l  \mathbf{R}_{lk}^l  \big)\Big|^2 + \mathrm{tr} \big(  \mathbf{R}_{l'k'}^l \pmb{\Psi}_{lk}^l  \mathbf{R}_{lk}^l  \mathbf{R}_{l'k'}^l \mathbf{R}_{lk}^l \pmb{\Psi}_{lk}^l   \big) \right).
\end{split}
\end{equation}
Thanks to the independence between the channel and noise, the last expectation of \eqref{eq:SecondExp} is computed as
\begin{equation} \label{eq:Thirdv1}
\begin{split}
&\mathbb{E} \left\{ \Big| \pmb{\phi}_{lk}^{\rm H} \big(\mathbf{N}_l^{p} \big)^{\rm H}  \big(\mathbf{B}_{lk}^l \big)^{\rm H} \mathbf{h}_{l'k'}^l  \Big|^2 \right\} \\
&=  \mathrm{tr} \left( \big(\mathbf{B}_{lk}^l \big)^{\rm H} \mathbb{E} \Big\{ \mathbf{h}_{l'k'}^l \big(\mathbf{h}_{l'k'}^l \big)^{\rm H}  \Big\} \mathbf{B}_{lk}^l \mathbb{E} \Big\{ \mathbf{N}_l^{p} \pmb{\phi}_{lk}  \pmb{\phi}_{lk}^{\rm H} \big(\mathbf{N}_l^{p} \big)^{\rm H}   \Big\}    \right) \\
&=  \sigma^2 m_{l'k'}^l  \mathrm{tr} \big(  \pmb{\Psi}_{lk}^l \mathbf{R}_{lk}^l  \mathbf{R}_{l'k'}^l \mathbf{R}_{lk}^l  \pmb{\Psi}_{lk}^l \big).
\end{split}
\end{equation}
Plugging \eqref{eq:Firstv1}-\eqref{eq:Thirdv1} into \eqref{eq:SecondExp} and doing some algebra, the coherent interference term \eqref{eq:SecondExp} is obtained in closed form as
\begin{multline} \label{eq:SecondofFirst}
\begin{split}
&\sum_{(l',k') \in \mathcal{P}_{lk}}  p_{l'k'} \mathbb{E} \Big\{ \big| \mathbf{v}_{lk}^H \mathbf{h}_{l'k'}^l \big|^2 \Big\}  = \\ & \sum_{(l',k') \in \mathcal{P}_{lk}}  p_{l'k'}  m_{l'k'}^l \mathrm{tr} \big( \mathbf{R}_{lk}^l \pmb{\Psi}_{lk}^l \mathbf{R}_{lk}^l \mathbf{R}_{l'k'}^l \big) +  \sum_{(l',k') \in \mathcal{P}_{lk}} p_{l'k'} \\
& \times \frac{z_{l'k'}^l}{\big(d_{l'k'}^l\big)^2}  \left( \big(d_{l'k'}^l\big)^2 + \frac{ \mathrm{tr} \Big( \big(\widetilde{\mathbf{R}}_{l'k'}^l\big)^2 \Big)}{ \big(S_{l'k'}^l \big)^2} \right) \Big| \mathrm{tr} \big(  \mathbf{R}_{l'k'}^l \pmb{\Psi}_{lk}^l  \mathbf{R}_{lk}^l  \big)\Big|^2   + \\
& \sum_{(l',k') \in \mathcal{P}_{lk}} p_{l'k'} z_{l'k'}^l\frac{ \mathrm{tr} \Big( \big(\widetilde{\mathbf{R}}_{l'k'}^l \big)^2 \Big)}{ \big(S_{l'k'}^l\big)^2}  
 \mathrm{tr} \left(  \mathbf{R}_{l'k'}^l \pmb{\Psi}_{lk}^l  \mathbf{R}_{lk}^l  \mathbf{R}_{l'k'}^l \mathbf{R}_{lk}^l \pmb{\Psi}_{lk}^l   \right).
\end{split}
\end{multline}
Combining \eqref{eq:FirstDen}, \eqref{eq:FirstofFirst}, and \eqref{eq:SecondofFirst}, the first part of the denominator of \eqref{eq:GeneralSINR} is computed in closed form as
\begin{equation} \label{eq:FirstFirstv1}
\begin{split}
& \sum_{l'=1}^L \sum_{k'=1}^{K} p_{l'k'} \mathbb{E} \Big\{ \big| \mathbf{v}_{lk}^{\rm H} \mathbf{h}_{l'k'}^l \big|^2 \Big\} = \sum_{l'=1}^L \sum_{k'=1}^{K}  p_{l'k'}  m_{l'k'}^l \times \\
&  \mathrm{tr} \big( \mathbf{R}_{lk}^l \pmb{\Psi}_{lk}^l \mathbf{R}_{lk}^l \mathbf{R}_{l'k'}^l \big) + \sum_{(l',k') \in \mathcal{P}_{lk}} p_{l'k'} \frac{z_{l'k'}^l}{\big(d_{l'k'}^l\big)^2} \times \\ 
&    
 \left( \big(d_{l'k'}^l\big)^2 + \frac{ \mathrm{tr} \Big( \big(\widetilde{\mathbf{R}}_{l'k'}^l\big)^2 \Big)}{ \big(S_{l'k'}^l \big)^2} \right) \Big| \mathrm{tr} \big(  \mathbf{R}_{l'k'}^l \pmb{\Psi}_{lk}^l  \mathbf{R}_{lk}^l  \big)\Big|^2 +  \sum_{(l',k') \in \mathcal{P}_{lk}} p_{l'k'}  \\
 & \times z_{l'k'}^l\frac{ \mathrm{tr} \Big( \big(\widetilde{\mathbf{R}}_{l'k'}^l \big)^2 \Big)}{ \big(S_{l'k'}^l\big)^2} \mathrm{tr} \left(  \mathbf{R}_{l'k'}^l \pmb{\Psi}_{lk}^l  \mathbf{R}_{lk}^l  \mathbf{R}_{l'k'}^l \mathbf{R}_{lk}^l \pmb{\Psi}_{lk}^l   \right).
\end{split}
\end{equation}
Utilizing \eqref{eq:Numerv1} and \eqref{eq:FirstFirstv1} into \eqref{eq:GeneralSINR} together with doing some algebra, we obtain the closed-form SINR expression as in the theorem. 
\subsection{Proof of Theorem~\ref{theorem:Asymptotic}} \label{Appendix:Asymptotic}
We begin with dividing the numerator and denominator of the SINR expression \eqref{eq:SINRlkMR} by $M  \mathrm{tr} \big(\mathbf{R}_{lk}^l \pmb{\Psi}_{lk}^l \mathbf{R}_{lk}^l \big)$. The numerator of \eqref{eq:SINRlkMR} is $  p_{lk} z_{lk}^l  \mathrm{tr} \left(\mathbf{R}_{lk}^l \pmb{\Psi}_{lk}^l \mathbf{R}_{lk}^l \right)/M$. 
Meanwhile, the first part in the denominator of \eqref{eq:SINRlkMR} becomes
\begin{equation} \label{eq:FirstTerm}
\begin{split}
\frac{\mathsf{NI}_{lk}}{M  \mathrm{tr} \big(\mathbf{R}_{lk}^l \pmb{\Psi}_{lk}^l \mathbf{R}_{lk}^l \big)} &= \frac{\sum_{l' =1 }^L \sum_{k'=1}^{K}  p_{l'k'} m_{l'k'}^l   \mathrm{tr} \big( \mathbf{R}_{lk}^l \pmb{\Psi}_{lk}^l \mathbf{R}_{lk}^l \mathbf{R}_{l'k'}^l  \big)}{M  \mathrm{tr} \big(\mathbf{R}_{lk}^l \pmb{\Psi}_{lk}^l \mathbf{R}_{lk}^l \big)} \\
&\stackrel{(a)}{\leq} \frac{1}{M}  \sum_{l' =1 }^L \sum_{k'=1}^{K}  p_{l'k'} m_{l'k'}^l \big\| \mathbf{R}_{l'k'}^l \big\|_2  \\
& \leq \frac{LK}{M}  \underset{(l',k')}{\max} \, \,p_{l'k'} m_{l'k'}^l \big\| \mathbf{R}_{l'k'}^l \big\|_2,
\end{split}
\end{equation}
where $(a)$ is obtained by the upper bound of the trace matrix expression \cite[Lemma B.7]{Bjornson2017bo}. By applying Assumption~\ref{Assumption1} to the last result \eqref{eq:FirstTerm}, we observe that this part converges to zero as either $M \rightarrow \infty$ or $S_{l'k'}^l \rightarrow \infty$. It is also straightforward to prove that the last part in the denominator of the SINR expression \eqref{eq:SINRlkMR} converges to zero as either $M \rightarrow \infty$ or $S_{l'k'}^l \rightarrow \infty$, i.e.,
\begin{equation} \label{eq:LastTerm}
\frac{\mathsf{NI}_{lk}}{M  \mathrm{tr} \big(\mathbf{R}_{lk}^l \pmb{\Psi}_{lk}^l \mathbf{R}_{lk}^l \big)} \rightarrow 0.
\end{equation}
Combining \eqref{eq:FirstTerm} and \eqref{eq:LastTerm}, the denominator of \eqref{eq:SINRlkMR} is formulated as $\mathsf{CI}_{lk}$, and therefore the asymptotic SINR expression as $M \rightarrow \infty$ for a given finite set of the scatterers and covariance matrices as shown in \eqref{eq:AsymptRatev1}. 

When $\mathbf{R}_{lk}^l$ is asymptotically orthogonal with all the other covariance matrices of the users sharing the same pilot signal as user~$k$ in cell~$l$, the second part in the denominator of \eqref{eq:SINRlkMR} converges to as
\begin{equation}
\begin{split}
 & \frac{\mathsf{CI}_{lk}}{M  \mathrm{tr} \big(\mathbf{R}_{lk}^l \pmb{\Psi}_{lk}^l \mathbf{R}_{lk}^l \big)} \rightarrow   \frac{p_{lk} z_{lk}^l  \mathrm{tr} \Big(\big(\widetilde{\mathbf{R}}_{lk}^l\big)^2 \Big)  \mathrm{tr} \big(\mathbf{R}_{lk}^l \pmb{\Psi}_{lk}^l \mathbf{R}_{lk}^l \big) }{ M \big( d_{lk}^l S_{lk}^l \big)^2 }  \\
& + \frac{p_{lk} z_{lk}^l  \mathrm{tr} \Big(\big(\widetilde{\mathbf{R}}_{lk}^l\big)^2 \Big)  \mathrm{tr} \Big(\big(\mathbf{R}_{lk}^l \pmb{\Psi}_{lk}^l \mathbf{R}_{lk}^l \big)^2 \Big) }{ M \big( d_{lk}^l S_{lk}^l \big)^2 \mathrm{tr} \big(\mathbf{R}_{lk}^l \pmb{\Psi}_{lk}^l \mathbf{R}_{lk}^l \big)}\\
 \stackrel{(a)}{\leq} & \frac{p_{lk} z_{lk}^l  \mathrm{tr} \Big(\big(\widetilde{\mathbf{R}}_{lk}^l\big)^2 \Big)  \mathrm{tr} \big(\mathbf{R}_{lk}^l \pmb{\Psi}_{lk}^l \mathbf{R}_{lk}^l \big) }{ M \big( d_{lk}^l S_{lk}^l \big)^2 } \\
& + \frac{p_{lk} z_{lk}^l  \mathrm{tr} \Big(\big(\widetilde{\mathbf{R}}_{lk}^l\big)^2 \Big)  \big\| \mathbf{R}_{lk}^l \pmb{\Psi}_{lk}^l \mathbf{R}_{lk}^l  \Big\|_2 }{ M \big( d_{lk}^l S_{lk}^l \big)^2} \\
& \xrightarrow{(b)} \frac{p_{lk} z_{lk}^l  \mathrm{tr} \Big(\big(\widetilde{\mathbf{R}}_{lk}^l\big)^2 \Big)  \mathrm{tr} \big(\mathbf{R}_{lk}^l \pmb{\Psi}_{lk}^l \mathbf{R}_{lk}^l \big) }{ M \big( d_{lk}^l S_{lk}^l \big)^2 },
\end{split}
\end{equation}
where $(a)$ is obtained by \cite[Lemma B.7]{Bjornson2017bo} and $(b)$ is because of our assumptions on the covariance matrices. Consequently, the asymptotic uplink SE of user~$k$ in cell~$l$ is obtained as in \eqref{eq:Limitb}.

As both the number of antennas at each BS and scatterers go without bound while the covariance matrices are non-orthogonal, the first and last parts in the denominator of \eqref{eq:SINRlkMR} go to zeros, while the second part converges to as
\begin{multline}
\frac{\mathsf{CI}_{lk}}{M  \mathrm{tr} \big(\mathbf{R}_{lk}^l \pmb{\Psi}_{lk}^l \mathbf{R}_{lk}^l \big)} \rightarrow \\ \frac{\sum_{(l',k') \in \mathcal{P}_{lk} \setminus (l,k)} p_{l'k'} z_{l'k'}^l \Big| \mathrm{tr} \big(  \mathbf{R}_{l'k'}^l \pmb{\Psi}_{lk}^l  \mathbf{R}_{lk}^l  \big)\Big|^2}{M  \mathrm{tr} \big(\mathbf{R}_{lk}^l \pmb{\Psi}_{lk}^l \mathbf{R}_{lk}^l \big)},
\end{multline}
and hence we obtain the asymptotic SE expression as shown in \eqref{eq:AsymptRatev3}. For the last case in \eqref{eq:AsymptRatev4} is obtained since the denominator of \eqref{eq:SINRlkMR} goes to zeros, while the numerator goes to a constant.
\subsection{Proof of Theorem~\ref{Theorem:Alg1}} \label{Appendix:Alg1}
We first prove that every $I_{lk}(\mathbf{p})$ is a standard interference function as given in Definition~\ref{Def:TypeI}. Indeed, the positivity property is true since it holds for all $\mathbf{p} \succeq \mathbf{0}$ that
\begin{equation}
\begin{split}
& I_{lk}(\mathbf{p}) \geq I_{lk}(\mathbf{0}) \stackrel{(a)}{=} \frac{\nu_{lk} \mathsf{NO}_{lk}}{z_{lk}^l \left| \mathrm{tr} \left(\mathbf{R}_{lk}^l \pmb{\Psi}_{lk}^l \mathbf{R}_{lk}^l \right) \right|^2}  \\
& \stackrel{(b)}{=} \frac{\sigma^2}{\hat{p}_{lk} (\beta_{lk}^l)^2 (d_{lk}^l)^2 \tau_p  \mathrm{tr} \left(\mathbf{R}_{lk}^l \pmb{\Psi}_{lk}^l \mathbf{R}_{lk}^l \right)} > 0,
\end{split}
\end{equation}
where $(a)$ is obtained since $\mathsf{NO}_{lk}$ is independent of the data powers and $(b)$ is obtained after doing some algebra. Let us denote the two vectors $\mathbf{p}$ and $\mathbf{p}'$ having $p_{lk} \geq p_{lk}', \forall l,k$, then we obtain
\begin{multline}
I_{lk}(\mathbf{p}) - I_{lk}(\mathbf{p}') = \\ \frac{\nu_{lk} \left(\mathsf{NI}_{lk}\big(\mathbf{p} \big) - \mathsf{NI}_{lk}\big(\mathbf{p}'\big)  \right) + \nu_{lk} \left(\mathsf{CI}_{lk}\big(\mathbf{p} \big) - \mathsf{CI}_{lk} \big(\mathbf{p}' \big) \right)  }{z_{lk}^l \left| \mathrm{tr} \left(\mathbf{R}_{lk}^l \pmb{\Psi}_{lk}^l \mathbf{R}_{lk}^l \right) \right|^2} \geq 0,
\end{multline}
which means $I_{lk}(\mathbf{p}) \geq I_{lk}(\mathbf{p}')$ and confirms the monotonicity. For the scalability, we observe that
\begin{equation}
\begin{split}
\alpha I_{lk}(\mathbf{p}) &=  \frac{\alpha \nu_{lk} \mathsf{NI}_{lk} (\mathbf{p}) + \alpha  \nu_{lk} \mathsf{CI}_{lk} (\mathbf{p}) + \alpha \nu_{lk} \mathsf{NO}_{lk} }{ z_{lk}^l \left| \mathrm{tr} \left(\mathbf{R}_{lk}^l \pmb{\Psi}_{lk}^l \mathbf{R}_{lk}^l \right) \right|^2 } \\
 &\stackrel{(a)}{=} \frac{ \nu_{lk} \mathsf{NI}_{lk} ( \alpha \mathbf{p}) +   \nu_{lk} \mathsf{CI}_{lk} (\alpha \mathbf{p}) + \alpha \nu_{lk} \mathsf{NO}_{lk} }{ z_{lk}^l \left| \mathrm{tr} \left(\mathbf{R}_{lk}^l \pmb{\Psi}_{lk}^l \mathbf{R}_{lk}^l \right) \right|^2 }\\
& \stackrel{(b)}{\geq} \frac{ \nu_{lk} \mathsf{NI}_{lk} ( \alpha \mathbf{p}) +   \nu_{lk} \mathsf{CI}_{lk} (\alpha \mathbf{p}) +  \nu_{lk} \mathsf{NO}_{lk} }{ z_{lk}^l \left| \mathrm{tr} \left(\mathbf{R}_{lk}^l \pmb{\Psi}_{lk}^l \mathbf{R}_{lk}^l \right) \right|^2 }\\
& = I_{lk}(\alpha \mathbf{p}),
 \end{split}
\end{equation}
which confirms that $I_{lk}(\mathbf{p})$ satisfies the monotonicity property. Since every $I_{lk}(\mathbf{p})$ is a  interference function, the update procedure in \eqref{eq:plkn} guarantees: First, beginning with the initial data power values $p_{lk}(0) = P_{\max,lk}, \forall l,k,$ all the updated power coefficients at iteration~$n$ are in the feasible domain. Indeed, we can prove this statement by mathematical induction following similar steps as \cite[Lemma~3]{van2020uplink}. Second, the update in \eqref{eq:plkn} ensures a reduction of the objective function along iterations.
\subsection{Proof of Theorem~\ref{theorem:Standardfunction}} \label{Appendix:Standardfunction}
Before getting in the proof, we recall the so-called two-sided
function \cite{sung2005generalized}. Specifically, a function $f(\mathbf{z})$ is a two-sided scalable if for $\forall \alpha > 1$ and $\frac{1}{\alpha} \mathbf{z} \preceq \hat{\mathbf{z}} \preceq \alpha \mathbf{z},$ implies the following two-sided inequality
\begin{equation}
	\frac{1}{\alpha} f (\mathbf{z}) < f(\hat{\mathbf{z}}) <  \alpha f(\mathbf{z}).
\end{equation}
We stress that the authors in \cite{rasti2010distributed} gave a toy example of a two-sided scalable function to update the data transmit power for a communication system under perfect channel state information. Unlike the previous works, all the functions $f_{lk} \left( \mathbf{p}(n-1) \right)$ involve the complicated expressions of many effects from channel estimation, pilot contamination, non-coherent interference, and noise.

We now prove that $f_{lk} \left( \mathbf{p}(n-1) \right)$ is a two-sided scalable function. If $I_{lk} \left( \mathbf{p}(n-1) \right) \leq P_{\max,lk}$, then it is sufficient to prove that $I_{lk} \left( \mathbf{p}(n-1) \right)$ is a two-side scalable function. Indeed, we have shown in Theorem~\ref{Algorithm1} that $I_{lk} \left( \mathbf{p}(n-1) \right)$ is a standard interference function. Therefore, for $\frac{1}{\alpha} \mathbf{p}_l(n-1) \preceq \hat{\mathbf{p}}(n-1) \preceq \alpha \mathbf{p}_l (n-1)$, we have:
\begin{equation} \label{eq:ProofTwosidedeq1}
I_{lk} \left( \mathbf{p}(n-1) \right) \stackrel{(a)}{<} I_{lk} \left( \alpha \hat{\mathbf{p}}(n-1) \right) \stackrel{(b)}{<} \alpha I_{lk} \left(\hat{\mathbf{p}}(n-1) \right),
\end{equation}
where $(a)$ is obtained by applying the monotonicity property for $\mathbf{p}(n-1) \preceq \alpha \hat{\mathbf{p}}(n-1)$; $(b)$ is obtained by using the scalability property for $\alpha \hat{\mathbf{p}}(n-1)$. As a consequence of \eqref{eq:ProofTwosidedeq1}, 
\begin{equation} \label{eq:ProofSide1}
\frac{1}{\alpha} I_{lk} \left( \mathbf{p}(n-1) \right) < I_{lk} \left(\hat{\mathbf{p}}(n-1) \right).
\end{equation}
Similarly, by applying the monotonicity and scalability properties for $\hat{\mathbf{p}}(n-1) \preceq \alpha \mathbf{p}(n-1)$, the following inequalities are obtained as
\begin{equation}
I_{lk} \left( \hat{\mathbf{p}}(n-1) \right) < I_{lk} \left( \alpha \mathbf{p}(n-1) \right) < \alpha I_{lk} \left( \mathbf{p}(n-1) \right) ,
\end{equation}
which results in
\begin{equation} \label{eq:ProofSide2}
I_{lk} \left( \hat{\mathbf{p}}(n-1) \right) < \alpha I_{lk} \left( \mathbf{p}(n-1) \right).
\end{equation}
Combining \eqref{eq:ProofSide1} and \eqref{eq:ProofSide2}, we attain the two-sided scalable property of $I_{lk} \left( \hat{\mathbf{p}}(n-1) \right)$ as
\begin{equation} \label{eq:ProofTwoSided}
\frac{1}{\alpha} I_{lk} \left( \mathbf{p}(n-1) \right) < I_{lk} \left( \hat{\mathbf{p}}(n-1) \right) < \alpha I_{lk} \left( \mathbf{p}(n-1) \right).
\end{equation}
We now prove that $P_{\max,lk}^2 / I_{lk} \left( \hat{\mathbf{p}}(n-1) \right)$ is also a two-side scalable function. In fact, this is straightforward since $I_{lk} \left( \hat{\mathbf{p}}(n-1) \right)$ satisfies the positivity, an inversion of \eqref{eq:ProofTwoSided} is
\begin{equation} \label{eq:ProofTwoSidedv2}
\frac{1}{\alpha} \frac{1}{I_{lk} \left(\mathbf{p}(n-1) \right)} <\frac{1}{I_{lk} \left( \hat{\mathbf{p}}(n-1) \right)} < \alpha \frac{1}{I_{lk} \left( \mathbf{p}(n-1) \right)}.
\end{equation}
Multiplying \eqref{eq:ProofTwoSidedv2} by $P_{\max,lk}^2$, we obtain the following inequalities
\begin{equation} \label{eq:ProofTwoSidedv3}
\frac{1}{\alpha} \frac{P_{\max,lk}^2}{I_{lk} \left(\mathbf{p}(n-1) \right)} <\frac{P_{\max,lk}^2}{I_{lk} \left( \hat{\mathbf{p}}(n-1) \right)} < \alpha \frac{P_{\max,lk}^2}{I_{lk} \left( \mathbf{p}(n-1) \right)},
\end{equation}
which completes the proof that confirms $f_{lk}(\mathbf{p}(n-1))$ being a two-side scalable function. From the initial values $p_{lk}(0) = P_{\max,lk}, \forall l,k,$ the update in \eqref{eq:UpdatedPowerv1} ensures that the iterative algorithm will converge to a fixed point.  
\bibliographystyle{IEEEtran}
\bibliography{IEEEabrv,refs}

\begin{IEEEbiography} 
	{Trinh Van Chien} (S'16-M'20) received the B.S. degree in Electronics and Telecommunications from Hanoi University of Science and Technology (HUST), Vietnam, in 2012. He then received the M.S. degree in Electrical and Computer Enginneering from Sungkyunkwan University (SKKU), Korea, in 2014 and the Ph.D. degree in Communication Systems from Link\"oping University (LiU), Sweden, in 2020. He is now a research associate at University of Luxembourg. His interest lies in convex optimization problems and machine learning applications for wireless communications and image \& video processing. He was an IEEE wireless communications letters exemplary reviewer for 2016 and 2017. He also received the award of scientific excellence in the first year of the 5Gwireless project funded by European Union Horizon's 2020.
\end{IEEEbiography}
\begin{IEEEbiography}
	{Hien Quoc Ngo}  received the B.S. degree in electrical engineering from the Ho Chi Minh City University of Technology, Vietnam, in 2007, the M.S. degree in electronics and radio engineering from Kyung Hee University, South Korea, in 2010, and the Ph.D. degree in communication systems from Link\"oping University (LiU), Sweden, in 2015. In 2014, he visited the Nokia Bell Labs, Murray Hill, New Jersey, USA. From January 2016 to April 2017, Hien Quoc Ngo was a VR researcher at the Department of Electrical Engineering (ISY), LiU. He was also a Visiting Research Fellow at the School of Electronics, Electrical Engineering and Computer Science, Queen's University Belfast, UK, funded by the Swedish Research Council.
	
	Hien Quoc Ngo is currently a Reader (Associate Professor) at Queen's University Belfast, UK. His main research interests include massive (large-scale) MIMO systems, cell-free massive MIMO, physical layer security, and cooperative communications. He has co-authored many research papers in wireless communications and co-authored the Cambridge University Press textbook \emph{Fundamentals of Massive MIMO} (2016).
	
	Dr. Hien Quoc Ngo received the IEEE ComSoc Stephen O. Rice Prize in Communications Theory in 2015, the IEEE ComSoc Leonard G. Abraham Prize in 2017, and the Best PhD Award from EURASIP in 2018. He also received the IEEE Sweden VT-COM-IT Joint Chapter Best Student Journal Paper Award in 2015. He was an \emph{IEEE Communications Letters} exemplary reviewer for 2014, an \emph{IEEE Transactions on Communications} exemplary reviewer for 2015, and an \emph{IEEE Wireless Communications Letters} exemplary reviewer for 2016.  He was awarded the UKRI Future Leaders Fellowship in 2019.
	Dr. Hien Quoc Ngo currently serves as an Editor for the IEEE Transactions on Wireless Communications, IEEE Wireless Communications Letters, Digital Signal Processing, Elsevier Physical Communication (PHYCOM), and IEICE Transactions on Fundamentals of Electronics, Communications and Computer Sciences. He was a Guest Editor of IET Communications, special issue on ``Recent Advances on 5G Communications'' and a Guest Editor of  IEEE Access, special issue on ``Modelling, Analysis, and Design of 5G Ultra-Dense Networks'', in 2017. He has been a member of Technical Program Committees for several IEEE conferences such as ICC, GLOBECOM, WCNC, and VTC.
\end{IEEEbiography}

\begin{IEEEbiography}
{Symeon Chatzinotas} is currently Full Professor / Chief Scientist I and Head of the SIGCOM Research Group at SnT, University of Luxembourg. He is coordinating the research activities on communications and networking, acting as a PI for more than 20 projects and main representative for 3GPP, ETSI, DVB.

In the past, he has been a Visiting Professor at the University of Parma, Italy, lecturing on “5G Wireless Networks”. He was involved in numerous R\&D projects for NCSR Demokritos, CERTH Hellas and CCSR, University of Surrey.

He was the co-recipient of the 2014 IEEE Distinguished Contributions to Satellite Communications Award and Best Paper Awards at EURASIP JWCN, CROWNCOM, ICSSC. He has (co-)authored more than 450 technical papers in refereed international journals, conferences and scientific books.

He is currently in the editorial board of the IEEE Transactions on Communications, IEEE Open Journal of Vehicular Technology and the International Journal of Satellite Communications and Networking.
\end{IEEEbiography}

\begin{IEEEbiography}
{Bj\"orn Ottersten} (S'87–M'89–SM'99–F'04) received the M.S. degree in electrical engineering and applied
physics from Linköping University, Linköping, Sweden, in 1986, and the Ph.D. degree in electrical
engineering from Stanford University, Stanford, CA, USA, in 1990. He has held research positions with
the Department of Electrical Engineering, Linköping University, the Information Systems Laboratory,
Stanford University, the Katholieke Universiteit Leuven, Leuven, Belgium, and the University of
Luxembourg, Luxembourg. From 1996 to 1997, he was the Director of Research with ArrayComm, Inc., a
start-up in San Jose, CA, USA, based on his patented technology. In 1991, he was appointed Professor of
signal processing with the Royal Institute of Technology (KTH), Stockholm, Sweden. Dr. Ottersten has
been Head of the Department for Signals, Sensors, and Systems, KTH, and Dean of the School of
Electrical Engineering, KTH. He is currently the Director for the Interdisciplinary Centre for Security,
Reliability and Trust, University of Luxembourg. He is a recipient of the IEEE Signal Processing Society
Technical Achievement Award, the EURASIP Group Technical Achievement Award, and the European
Research Council advanced research grant twice. He has co-authored journal papers that received the
IEEE Signal Processing Society Best Paper Award in 1993, 2001, 2006, 2013, and 2019, and 8 IEEE
conference papers best paper awards. He has been a board member of IEEE Signal Processing Society,
the Swedish Research Council and currently serves of the boards of EURASIP and the Swedish
Foundation for Strategic Research. Dr. Ottersten has served as Editor in Chief of EURASIP Signal
Processing, and acted on the editorial boards of IEEE Transactions on Signal Processing, IEEE Signal
Processing Magazine, IEEE Open Journal for Signal Processing, EURASIP Journal of Advances in Signal
Processing and Foundations and Trends in Signal Processing. He is a fellow of EURASIP.
\end{IEEEbiography}
\begin{IEEEbiography}
{M\'erouane Debbah} received the M.Sc. and Ph.D. degrees from the Ecole Normale Supérieure Paris-Saclay, France. He was with Motorola Labs, Saclay, France, from 1999 to 2002, and also with the Vienna Research Center for Telecommunications, Vienna, Austria, until 2003. From 2003 to 2007, he was an Assistant Professor with the Mobile Communications Department, Institut Eurecom, Sophia Antipolis, France. In 2007, he was appointed Full Professor at CentraleSupelec, Gif-sur-Yvette, France. From 2007 to 2014, he was the Director of the Alcatel-Lucent Chair on Flexible Radio. From 2014 to 2021, he was Vice-President of the Huawei France Research Center. He was jointly the director of the Mathematical and Algorithmic Sciences Lab as well as the director of the Lagrange Mathematical and Computing Research Center. Since 2021, he is Chief Research Officer at the Technology Innovation Institute in Abu Dhabi. He leads jointly the AI and Telecommunication centers. He has managed 8 EU projects and more than 24 national and international projects. His research interests lie in fundamental mathematics, algorithms, statistics, information, and communication sciences research. He is an IEEE Fellow, a WWRF Fellow, a Eurasip Fellow, an Institut Louis Bachelier Fellow and a Membre émérite SEE. He was a recipient of the ERC Grant MORE (Advanced Mathematical Tools for Complex Network Engineering) from 2012 to 2017. He was a recipient of the Mario Boella Award in 2005, the IEEE Glavieux Prize Award in 2011, the Qualcomm Innovation Prize Award in 2012, the 2019 IEEE Radio Communications Committee Technical Recognition Award and the 2020 SEE Blondel Medal. He received more than 20 best paper awards, among which the 2007 IEEE GLOBECOM Best Paper Award, the Wi-Opt 2009 Best Paper Award, the 2010 Newcom++ Best Paper Award, the WUN CogCom Best Paper 2012 and 2013 Award, the 2014 WCNC Best Paper Award, the 2015 ICC Best Paper Award, the 2015 IEEE Communications Society Leonard G. Abraham Prize, the 2015 IEEE Communications Society Fred W. Ellersick Prize, the 2016 IEEE Communications Society Best Tutorial Paper Award, the 2016 European Wireless Best Paper Award, the 2017 Eurasip Best Paper Award, the 2018 IEEE Marconi Prize Paper Award, the 2019 IEEE Communications Society Young Author Best Paper Award, the 2021 Eurasip Best Paper Award, the 2021 IEEE Marconi Prize Paper Award as well as the Valuetools 2007, Valuetools 2008, CrownCom 2009, Valuetools 2012, SAM 2014, and 2017 IEEE Sweden VT-COM-IT Joint Chapter best student paper awards. He is an Associate Editor-in-Chief of the journal Random Matrix: Theory and Applications. He was an Associate Area Editor and Senior Area Editor of the IEEE TRANSACTIONS ON SIGNAL PROCESSING from 2011 to 2013 and from 2013 to 2014, respectively. From 2021 to 2022, he serves as an IEEE Signal Processing Society Distinguished Industry Speaker
\end{IEEEbiography}
\end{document}